	\pgfplotsset{compat=1.17}
	\newcommand*\dd{\mathop{}\!\mathrm{d}}
	\newtheorem{theorem}{Theorem}
	\newtheorem{lemma}{Lemma} 
	\newtheorem{proposition}{Proposition} 
	\newtheorem{definition}{Definition}
	\newtheorem{corollary}{Corollary}
	\newtheorem{assumption}{Assumption}
	\newtheorem*{example*}{Example}
	\newcommand\independent{\protect\mathpalette{\protect\independenT}{\perp}}
    \def\independenT#1#2{\mathrel{\rlap{$#1#2$}\mkern2mu{#1#2}}}
    \DeclareMathOperator*{\argmin}{\arg\!\min}
    \newtheorem*{assumptions*}{\assumptionnumber}
\providecommand{\assumptionnumber}{}
\definecolor{cyan}{cmyk}{1, 0.4, 0, 0}
\definecolor{mypink}{RGB}{219, 48, 122}
 \def\thanks#1{\protected@xdef\@thanks{\@thanks
        \protect\footnotetext{#1}}}
\title{Matching for causal effects via multimarginal unbalanced optimal transport}
\author[1]{Florian Gunsilius\thanks{Authors listed alphabetically. FG is supported by a MITRE faculty research award. The authors thank Donald W.~Andrews, L\'ena\"ic Chizat, Antonio Galvao, Augusto Gerolin, Meng Hsuan Hsieh, Jun Kitagawa, Marcel Nutz, and Jeffrey Wooldridge, and participants at the IMSI conference on Applied Optimal Transport, the Rochester Econometrics Mini-Conference, LSE, Michigan State University, the University of Glasgow, the University of Ottawa, and Yale University for helpful comments. A previous version was titled ``Matching for causal effects via multimarginal optimal transport''. All errors are the authors'.}}
\author[2]{Yuliang Xu}
\affil[1]{Department of Economics, University of Michigan}
\affil[2]{Department of Biostatistics, University of Michigan}
\affil[ ]{\texttt{\{ffg, yuliangx\}@umich.edu}}
\date{\today}
\begin{document}
\maketitle
\begin{abstract}
Matching on covariates is a well-established framework for estimating causal effects in observational studies. The principal challenge stems from the often high-dimensional structure of the problem. Many methods have been introduced to address this, with different advantages and drawbacks in computational and statistical performance as well as interpretability. This article introduces a natural optimal matching method based on multimarginal unbalanced optimal transport that possesses many useful properties in this regard. It provides interpretable weights based on the distance of matched individuals, can be efficiently implemented via the iterative proportional fitting procedure, and can match several treatment arms simultaneously. Importantly, the proposed method only selects good matches from either group, hence is competitive with the classical $k$-nearest neighbors approach in terms of bias and variance in finite samples. Moreover, we prove a central limit theorem for the empirical process of the potential functions of the optimal coupling in the unbalanced optimal transport problem with a fixed penalty term. This implies a parametric rate of convergence of the empirically obtained weights to the optimal weights in the population for a fixed penalty term.
\end{abstract}
\section{Introduction}
Identifying cause and effect is one of the primary goals of scientific research. The leading approaches to uncover causal effects are randomized controlled trials. Unfortunately, such trials are often practically infeasible on ethical grounds, might not be generalizable beyond the experimental setting due to lack of variation in the population, or simply have too few participants to generate robust results due to financial or logistical restrictions. An attractive alternative is to use observational data, which are ubiquitous, often readily available, and comprehensive.

The main challenge in using observational data for causal inference is the fact that assignment into treatment is not perfectly randomized. This implies that individuals assigned to different treatments may possess systematically different observable and unobservable covariates. Comparing the outcomes between individuals in different treatment groups may then yield a systematically biased estimator of the true causal effect. Matching methods are designed to balance the treatment samples in such a way that differences between the observed covariates of the groups are minimized. This allows the researcher to directly compare the balanced treatment groups for estimating the true causal effect under the assumption that the unobservable covariates of individuals are similar if their observed covariates are similar.
\subsection{Existing approaches and persisting challenges}
The most widely adopted matching approach in the setting of a binary treatment is matching on the propensity score \citep{rosenbaum1983central}. Many adjustments and refinements have been proposed in the literature, see for instance \citet{austin2008critical, austin2011introduction} and \citet{imbens2004nonparametric} for reviews. The idea is to balance the samples of treatment and control groups by comparing individuals with similar propensity scores, i.e.~the probability of receiving the treatment conditional on the observable covariates. Under the assumption that the observed covariates capture all the systematic heterogeneity of individuals, the so-called ``unconfoundedness assumption'' \citep{imbens2000role, imbens2015causal}, the researcher can obtain an estimate of the respective causal effect by comparing the outcomes of individuals in treatment and control group with similar propensity scores. The utility of the propensity score derives from dimension reduction: instead of having to match individual covariates in a possibly high-dimensional space, one can match on a one-dimensional score. 

However, propensity score matching is no panacea for unbalanced samples. First, as pointed out in \citet{imbens2000role}, in multivalued treatment settings comparing individuals across groups with similar propensity scores is not equivalent to comparing individuals with similar covariates. This induces bias in the procedure and makes the propensity score ``less well-suited'' in a setting with more than two treatment arms \citep{imbens2000role, lopez2017estimation}. Recent approaches address this issue at the cost of interpretability \citep{li2019propensity}. Second, even in binary treatment settings, propensity score matching has been shown to introduce bias in finite samples \citep{king2019propensity}. The reason lies in the fact that propensity score matching tries to generate a balanced sample that resembles a randomized trial. This does not imply that all corresponding covariates are matched exactly in finite samples, see \citet{king2019propensity} for a detailed analysis. Third, in high-dimensional settings, one usually resorts to estimating the propensity score function based on parametric models, which introduces misspecification bias.

The main alternative to propensity score matching is to match the covariates directly. There are a multitude of matching approaches and refinements, see for instance \citet{abadie2004implementing, abadie2011bias, hainmueller2012entropy, lopez2017estimation, morucci2020adaptive, rosenbaum1989optimal, rosenbaum2020modern, rosenbaum2007minimum, zubizarreta2015stable} and references therein. \citet{stuart2010matching} provides a concise overview of the fundamental ideas. \citet{kallus2020generalized} recently introduced a general framework for optimal matching that subsumes many approaches. The most widely used idea is to specify a distance, e.g.~a weighted Euclidean distance or some discrepancy measure like the Kullback-Leibler divergence, and proceed via some form of $k$-nearest neighbor matching \citep{abadie2004implementing, rubin1973matching, stuart2010matching}. 

Direct matching comes with computational and statistical challenges. From a computational perspective, matching covariates in high-dimensional settings is more complex than matching on a one-dimensional score, see for instance \citet{rosenbaum1989optimal} for a classical analysis. From a statistical perspective, most optimal matching approaches, and especially $k$-nearest neighbors matching approaches or variants thereof, are non-smooth approaches due to the strict cut-off at $k$ neighbors. Therefore, they often provide optimal matchings with poor statistical properties. In particular, the obtained estimators for causal effects are often asymptotically biased and need not converge at the parametric rate as shown in \citet{abadie2006large} in the case of $k$-nearest neighbor approaches. \citet{abadie2011bias} introduce a correction for these estimators based on linear regression. Analogous rigorous statistical results for similar matching algorithms have been absent in the literature, especially in the multivalued treatment setting as argued in \citet{lopez2017estimation}. 

There is hence a need for a direct matching method that can provide interpretable weights of the optimal matches across several treatment arms while possessing good computational and statistical properties.

\subsection{Outline of the proposed method}

This article introduces a natural matching method based on unbalanced optimal transport. It is an alternative to existing matching approaches in terms of how it weights the matches. It is similar in spirit to the nearest-neighbor matching idea, with the important difference that the number of good matches and corresponding weights are chosen endogenously. The idea of optimal transportation is to construct a coupling $\gamma$ between the distributions of the covariate distributions of the respective treatment and control groups. This coupling introduces a natural weighting scheme by considering the respective conditional measures $\gamma(A|x_1)$ for each treatment group, where $A$ is the set of control individuals for the given treated individual with covariate $X_1=x_1$. 

This weighting is what sets the proposed method apart from other matching approaches, especially the classical $k$-nearest neighbor approach, where one chooses a fixed number $k$ of neighbors and weights them uniformly, a non-smooth approach for given data \citep{abadie2006large}. By contrast, the proposed method chooses a natural weighting $\gamma(\cdot |x_1)$ of the matched control individuals that takes into account the cost of matching. One commonly chooses distances on the covariate space as the cost function, for instance the Euclidean distance or weighted counterparts. Then the proposed method naturally re-weights the matched control units in terms of their distance from the treated unit. This provides a natural and smooth weighting procedure and is the main reason why our method enjoys good statistical performance, as we prove in section \ref{sec:stats} below. 

Moreover, and equally importantly, the proposed method is based on unbalanced optimal transport \citep{chizat2019unbalanced, chizat2018scaling, sejourne2019sinkhorn, liero2018optimal}, which enables it to endogenously choose optimal matchings and drop observations in either treatment- or control group that do not have a good match in the other group. This is different from the $k$-nearest neighbor approach, where one matches every observation in one group and only drops observations from the other group that are bad matches. The fact that our proposed method can drop bad potential matches from both groups makes it competitive in terms of bias in finite samples with respect to $k$-nearest neighbor approaches. 

The concepts of unbalanced and partial optimal transport were introduced to allow for partial matches without the constraint from classical optimal transport that requires researchers to match every individual in the dataset \citep{caffarelli2010free, figalli2010optimal, kitagawa2015multi}. The utility of the unbalanced optimal transport approach stems from the fact that it relies on entropy regularization to achieve this and can hence be implemented efficiently in practice via versions of the standard \emph{Iterative Proportion Fitting Procedure} (IPFP) \citep{deming1940least}, also called the Sinkhorn algorithm \citep{sinkhorn1964relationship}. This procedure can straightforwardly be extended to the multiple-treatment setting, which we analyze below. In this sense, the proposed method addresses the challenges outlined in the introduction: it (i) provides optimal matchings in finite samples, (ii) gives interpretable weights based on covariate distance, (iii) is computationally efficient with (iv) good statistical properties, and (v) can even provide a joint optimal matching in multivalued treatment settings. We demonstrate the excellent finite sample properties in a simple simulation exercise in section \ref{sec:illu}.

We provide the statistical properties of the optimal matching obtained via the proposed procedure in section \ref{sec:stats}. These results are valid when there are different numbers of individuals in different treatment arms, which is crucial in practical settings. From a mathematical perspective, it complements the analytical and statistical results in \citet{sejourne2019sinkhorn} by showing that the optimal weights of the matches obtained via this procedure converge at the parametric rate for a fixed penalty term. It is built on a central limit theorem for the potentials of the dual problem of the unbalanced optimal transport problem. This result also complements other existing statistical results for the Sinkhorn alogorithm for classical entropy-regularized balanced optimal transport problems from \citet{harchaoui2020asymptotics}, who require that the number of individuals in both treatment arms are the same. However, they derive an expression for the asymptotic variance of the estimator via Gaussian chaos decompositions, which we do not do.

In concurrent work to ours, \citet{dunipace2021optimal} provides a matching estimator that is also based on optimal transportation. Their setting is a prediction problem, which is different from ours. They use optimal transport matching to impute the outcomes and the corresponding effects. In contrast, we use optimal transportation to jointly balance and re-weight several samples that only receive one type of treatment. In particular, a main result of this paper is to introduce a balancing of the groups via the conditional measure $\gamma(\cdot|x)$ obtained by the joint coupling $\gamma$. Our proposed method hence redefines the estimation for causal effects in a way that is different from existing approaches.  

\section{Matching via unbalanced optimal transport}
This section briefly introduces the mathematical background on optimal transport needed for the proposed matching estimator. For further information we refer to \citet{santambrogio2015optimal} and \citet{villani2003topics,villani2008optimal} for introductions to general optimal transportation theory and to \citet{peyre2019computational} for a review of the divergence-regularized optimal transportation and unbalanced optimal transport approach we consider.

\subsection{Optimal transport: a brief overview}
For two given marginal distributions $\mu_0$ and $\mu_1$ of characteristics of groups to be matched, defined on spaces $\mathcal{X}$ and $\mathcal{Y}$ respectively, the classical Kantorovich problem \citep[chapter 1]{villani2003topics} consists in finding an optimal transport plan, i.e.~a probabilistic matching between the group whose characteristics are distributed as $\mu_0$ and the group whose characteristics are distributed as $\mu_1$. This is a joint coupling $\gamma$ in the space of all couplings $\Pi(\mu_0,\mu_1)$ of these two marginals in such a way that the average cost of transportation of mass from $\mathcal{X}$ to $\mathcal{Y}$ is minimized. Formally, the Kantorovich problem takes the form
\begin{equation}\label{eq:kantorovich}
\inf_{\gamma\in\Pi(\mu,\nu)} \int_{\mathcal{X}\times\mathcal{Y}} c(x,y) \dd\gamma(x,y),
\end{equation} 
where $\Pi(\mu,\nu)$ is the set of all (Borel-) probability measures on $\mathcal{X}\times\mathcal{Y}$ whose marginals are $\mu$ and $\nu$, respectively, and where $c:\mathcal{X}\times\mathcal{Y}\to\mathbb{R}$ is a cost function encoding the difficulty of moving mass from $\mathcal{X}$ to $\mathcal{Y}$. In many cases of interest the cost function is chosen to be a metric on the underlying space, such as the Euclidean distance, i.e.
\[c(x,y)\equiv |x-y|^2.\] A distance function such as the squared Euclidean distance is the most natural choice for a cost function, but our method applies to any cost function that is infinitely often differentiable. This can be useful in settings where the researchers want to take into account more information on the cost of comparing individuals than only the distance of their covariates. The Kantorovich problem is a convex relaxation of the Monge problem, which requires that the optimal transportation is achieved via a deterministic mapping $T$, that is, it takes the form
\begin{equation}\label{Eq:Monge_problem}\inf_{T_\#\mu = \nu} \int_{\mathcal{X}\times\mathcal{Y}} c(x,T(x)) \dd\mu(x),
\end{equation} 
where $T_\# P$ denotes the pushforward measure of $P$ via $T:\mathcal{X}\to\mathcal{Y}$. 

This binary setting can straightforwardly be extended to the multimarginal setting by searching for a joint coupling over several marginals $\{\mu_j\}_{j=1,\ldots, J}$ \citep{gangbo1998optimal}:
\begin{equation}\label{eq:classic}
\inf_{\gamma\in\Pi(\mu_1,\ldots,\mu_J)} \int_{\mathcal{X}} c(x) \dd\gamma(x),
\end{equation} where we denote $\mathcal{X} = \mathcal{X}_1,\ldots,\mathcal{X}_J$ and $x\coloneqq (x_1,\ldots, x_J)$. Note that in the multimarginal setting we index the measures $\mu_j$ starting from $1$, in contrast to the binary setting. This is to signify that in a multimarginal setting one does not necessarily have a true control group.

In many cases in the binary setting, for instance when $\mu_0$ possesses a density with respect to Lebesgue measure and the cost function is strictly convex, $\gamma$ is supported on the graph of a function. In this case one obtains a deterministic matching between individuals in $\mu_0$ and individuals in $\mu_1$ that takes the form of a function, i.e.~every individual with characteristic $x\in\mathcal{X}$ is mapped to at most one individual with characteristic $y\in\mathcal{Y}$. A special case of this is Brenier's classical theorem on the structure of optimal solutions of the Monge-Kantorovich problem in the case of the squared Euclidean distance as a cost function \citep{brenier1991polar}. Similar results have been shown in the multimarginal setting, see \citet{agueh2011barycenters, carlier2010matching, gangbo1998optimal, pass2015multi}.

\subsection{Unbalanced optimal transport and divergence regularization}
The classical optimal transport problem is not well-suited for the purpose of matching for causal effects for two reasons. First, its computational and statistical properties are poor. In particular, in practical settings, when the actual distributions $\mu_0$ and $\mu_1$ are unknown and have to be estimated from the observable characteristics of individuals, estimating the optimal matching $\gamma$ suffers from the curse of dimensionality \citep{forrow2019statistical, gunsilius2021convergence, hutter2021minimax, manole2021plugin, weed2019sharp}. This curse of dimensionality is also computational \citep{altschuler2021hardness}. Second, classical optimal transport imposes the strong constraint that all elements in the sample have to be matched. From a causal inference perspective, this induces bias because not all individuals will have a close enough match, and matching individuals that are too different induces bias.

This is why we rely on unbalanced optimal transport \citep{chizat2018scaling, chizat2019unbalanced, sejourne2019sinkhorn, liero2018optimal}, which solves the partial optimal transport problem via divergence regularization. A special case is entropy-regularization of the classical balanced optimal transport problem, which was introduced in \citet{galichon2010matching} in economics and in \citet{cuturi2013sinkhorn} in the machine learning literature and is one of the main ways to compute approximations to optimal transportation matchings in practice \citep{peyre2019computational}.

To obtain a multimarginal matching between the distributions $\mu_1,\ldots,\mu_J$ of characteristics of individuals, we solve the following divergence-regularized optimal transport problem for some $\varepsilon> 0$ and a cost function $c(x)$ which we will assume throughout is smooth, i.e.~$c(x)\equiv c(x_1,\ldots,x_J)\in C^\infty\left(\mathcal{X}\right)$ \citep{sejourne2019sinkhorn}:
\begin{equation}\label{MMeq}
\inf_{\gamma\in\mathcal{M}^+\left(\mathcal{X}\right)} \int_{\mathcal{X}} c(x)\dd\gamma(x) + \varepsilon KL(\gamma || \bigotimes_{j=1}^J\mu_j) + \sum_{j=1}^J D_\phi\left(\pi_j\gamma\vert\vert \mu_j\right).
\end{equation}
Here, $\mathcal{M}^+\left(\mathcal{X}\right)$ denotes the set of all non-negative finite measures on $\mathcal{X}\subset \left(\mathbb{R}^d\right)^J$, $\bigotimes_j\mu_j(x) = \mu_1(x_1)\otimes\cdots\otimes\mu_J(x_J)$ denotes the independence coupling of the marginal measures $\mu_1,\ldots,\mu_J$,
\[KL(\gamma||\bigotimes_j\mu_j)\coloneqq \begin{cases} \int_{\mathcal{X}} \ln\frac{\dd\gamma}{\dd\bigotimes_j\mu_j}(x)\dd\gamma(x) + (\int_{\mathcal{X}}\dd\bigotimes_j\mu_j(x) - \int_{\mathcal{X}}\dd\gamma(x)) & \text{if $\gamma\ll\bigotimes_j\mu_j$}\\ +\infty&\text{otherwise}\end{cases}\] denotes the Kullback-Leibler divergence \citep{kullback1951information, kullback1959information}, $\pi_j\gamma$ denotes the projection onto the $j$-th marginal of the coupling $\gamma$, and $D_\varphi$ denotes the $\phi$-divergence (or Csisz\`ar-divergence) \citep{,csiszar1967information, csiszar1975divergence}
\[D_\phi\left(\mu||\nu\right)\coloneqq \int_{\mathcal{X}} \phi\left(\frac{\dd\mu}{\dd\nu}\right)\dd\nu + \phi'_\infty \int_{\mathcal{X}}\dd\mu^\perp,\]
where $\phi:[0,\infty)\to[0,\infty]$ is an entropy function, which is assumed to be convex, lower semi-continuous, and $\phi(1)=0$. $\phi'_\infty$ is the recession constant and $\mu^\perp$ is the singular part of $\mu$ with respect to $\nu$ in the Lebesgue decomposition. By assumption, if $D_\phi\left(\mu||\nu\right)<+\infty$ then $\mu\ll \nu$. 

We phrased the unbalanced optimal transport problem as a multimarginal problem, but in most practical settings it will be used in the binary setting, which is of the form
\begin{equation}\label{MMeq_bivariate}
\inf_{\gamma\in\mathcal{M}^+\left(\mathcal{X}\times\mathcal{Y}\right)} \int_{\mathcal{X}\times\mathcal{Y}} c(x,y)\dd\gamma(x,y) + \varepsilon KL(\gamma || \mu\otimes\nu) + D_\phi\left(\pi_1\gamma\vert\vert \mu\right) + D_\phi\left(\pi_2\gamma\vert\vert \nu\right).
\end{equation}

The main difference to the classical optimal transport problem are (i) that the optimal coupling $\gamma$ does not need to be a probability measure and (ii) the addition of the $\phi$-divergence terms, which allow for the creation and destruction of mass and do not enforce the constraint from classical optimal transport that the corresponding measures $\mu$ and $\nu$ need to have the same mass. In practice, this means that computing optimal couplings via \eqref{MMeq_bivariate} or its multimarginal analogue \eqref{MMeq} will provide partial optimal matches where individuals that do not have a close enough match are automatically discarded. This is what provides a small bias in finite samples. 

From a mathematical perspective, the unbalanced problem is a generalization of the entropy-regularized optimal transport problem: setting $D_\varphi = +\infty$ if $\pi_j\gamma \neq \mu_j$ almost everywhere and $0$ otherwise and fixing $\varepsilon>0$ reduces our setting to this balanced problem. The regularity properties of this classical balanced entropy regularized optimal transport problem have been analyzed thoroughly, see for instance \citet{carlier2020differential, carlier2021linear, genevay2019sample, di2020optimal} for recent results and references therein. Note that as $\varepsilon\to0$ the optimal matching $\gamma^*$ solving \eqref{MMeq} approximates the optimal matching obtained via the classical Kantorovich problem \eqref{eq:classic} under mild regularity assumptions \citep{nutz2021entropic, eckstein2021quantitative}.

\subsection{The dual problem and a generalization of the IPFP/Sinkhorn algorithm}
The utility of the unbalanced optimal transport formulation in our matching setup stems from its dual representation, which follows from the classical Fenchel-Rockafellar theorem \citep{chizat2018scaling, sejourne2019sinkhorn}:
\begin{multline}\label{eq:dual}
\sup_{\varphi\coloneqq (\varphi_1,\ldots,\varphi_J)\in \prod_{j=1}^J C\left(\mathcal{X}_j\right)} F(\varphi)\coloneqq\\ -\sum_{j=1}^J\int_{\mathcal{X}_j}\phi^*\left(-\varphi_j(x_j)\right)\dd\mu_j(x_j) -\varepsilon \int_{\mathcal{X}}\left[\exp\left(\frac{\sum_{j=1}^J\varphi_j(x_j)-c(x)}{\varepsilon}\right)-1\right]\dd\bigotimes_{j=1}^J\mu_j(x),
\end{multline}
where $C(\mathcal{X})$ is the space of all continuous functions on $\mathcal{X}$. 

The reason for the utility is that the first-order optimality conditions to \eqref{eq:dual} can be derived by Moreau's decomposition in Banach spaces \citep[Theorem 1]{combettes2013moreau}, which leads to the following generalized Sinkhorn system \citep[Proposition 7]{sejourne2019sinkhorn}:
\begin{equation}\label{eq:FOC}
\varphi_j(x_j) = -\text{aprox}_{\phi^*}^\varepsilon\left(\varepsilon\log\int_{\mathcal{X}_{-j}} \exp\left(\frac{\sum_{i\neq j}\varphi_i(x_i) - c(x)}{\varepsilon}\right)\dd\bigotimes_{i\neq j}\mu_i(x_i)\right)\qquad\forall j=1,\ldots, J,
\end{equation}
where 
\[\text{aprox}_{\phi^*}^\varepsilon(p)\coloneqq \argmin_{q\in\mathbb{R}}\varepsilon\exp\left(\frac{p-q}{\varepsilon}\right) +\phi^*(q) \qquad\text{for all $p\in\mathbb{R}$}\] is the anisotropic approximation operator \citep{combettes2013moreau, sejourne2019sinkhorn}, and where
\[\phi^*(q)\coloneqq \sup_{p\geq 0} pq-\phi(p)\] is the classical Legendre-Fenchel transform of $\phi(p)$ \citep{rockafellar1997convex}. We write as a shorthand
\begin{multline*}\int_{\mathcal{X}_{-j}} f(x_{-j})\dd\bigotimes_{i\neq j} \mu_i(x_i) \\= \int_{\mathcal{X}_1}\cdots\int_{\mathcal{X}_{j-1}}\int_{\mathcal{X}_{j+1}}\cdots \int_{\mathcal{X}_J}f(x_1,\ldots,x_{j-1},x_{j+1},\ldots,x_J)\dd\mu_1\otimes\cdots\otimes\mu_{j-1}\otimes\mu_{j+1}\cdots\otimes \mu_J.
\end{multline*}

The generalized Sinkhorn system \eqref{eq:FOC} can be solved by a generalization of the classical Sinkhorn algorithm/IPFP \citep{deming1940least, sinkhorn1964relationship} as in \cite{sejourne2019sinkhorn}, which is computationally efficient in practice by recursively solving for the sequences $\left\{\varphi_j^m\right\}_{m\in\mathbb{N}\cup\{0\}}$, $j=1\ldots,J$ defined via
\begin{equation}\label{eq:sinkhornconv}
\begin{aligned}
\varphi_j^{m+1}(x_j) &= -\text{aprox}_{\phi^*}^\varepsilon\left(\varepsilon\log\int_{\mathcal{X}_{-j}} \exp\left(\frac{\sum_{i<j}\varphi_i^{m+1}(x_i) + \sum_{i>j}\varphi_i^{m}(x_i)- c(x)}{\varepsilon}\right)\dd\bigotimes_{i\neq j}\mu_i(x_i)\right)
\end{aligned}
\end{equation}
for all $j=1,\ldots, J$. Note that in general the optimal potentials $\{\varphi^*_j\}_{j=1}^J$ need not be unique. However, under the assumption that $\phi^*$ is strictly convex, the dual functional \eqref{eq:dual} becomes strictly convex in $(\varphi_1,\ldots,\varphi_J)$, which implies a unique solution \citep{sejourne2019sinkhorn}. We uphold this assumption for our statistical results below (Assumption \ref{ass:div}). 

In the two-marginal setting, \citet[Theorem 1]{sejourne2019sinkhorn} showed that this approach converges to the solution of \eqref{eq:dual} as $m\to\infty$ for Lipschitz cost functions under some regularity assumptions on the function $\phi^*$ and a normalization of the potentials, which we also uphold (Assumption \ref{ass:div}). The same argument using compactness also holds in the multimarginal setting on compact supports with Lipschitz cost functions. Addressing related questions, \citet{di2020optimal} and \citet{carlier2021linear} proved (linear) convergence of the sequence $\varphi_j^m$ to the optimal $\varphi^*\equiv(\varphi^*_1,\ldots,\varphi^*_J)$ as $m\to\infty$ in the special case of the multimarginal extension of the classical Sinkhorn algorithm for different degrees of smoothness of the cost function, extending convergence results from the two-marginal case for the classical Sinkhorn algorithm \citep{chen2016entropic, franklin1989scaling, ruschendorf1995convergence}.

This dual approach computes the optimal potentials $\varphi^*$, and only up to the additive constants $\lambda_{\varphi_j}$, which are fixed by normalization. From these optimal potentials, one can generate the optimal matching $\gamma^*$ via 
\begin{equation}\label{eq:optim_measure}
\dd\gamma^* = \exp\left(\frac{\sum_{j=1}^J\varphi_j^*-c}{\varepsilon}\right)\cdot\dd\bigotimes_j\mu_j,
\end{equation} 
i.e.~the optimal potentials are used to generate the optimal matching $\gamma$ as an exponential tilting of the independence coupling of the marginals $\mu_j$. $\gamma^*$ and $\varphi^*_j$ all depend on $\varepsilon$, but we make this dependence implicit for notational convenience.

It is the IPFP that makes solving the entropy regularized problem so convenient. The above procedure \eqref{eq:sinkhornconv} is written out for general measures $\mu_j$. In practice, where the measures $\mu_j$ are unknown and have to be estimated from observations, efficient implementations have been proposed, see \citet{peyre2019computational} and references therein. Of particular relevance for high-dimensional matching scenarios are mini-batch approaches that break the problem into smaller parts \citep{fatras2021unbalanced}.

For the multimarginal case introduced above, the optimal probabilistic matching $\gamma^*$ of the participants is joint over all $J$ treatment arms. In particular, for each individual $n_j$ in treatment arm $j$ it provides the optimally weighted matchings in each other treatment arm $i\neq j$. The additional information obtained from the full multimarginal matching comes with an increase in computational cost.  That is, one needs to specify (and store in memory) the squared distances between all possible points of all $J$ marginal measures. This becomes increasingly costly with the number of arms $J$, even for the most efficient implementations \citep{pham2020unbalanced}. In addition, to obtain a tensor joint distribution that can provide a mapping close to a curve in multi-dimensional space, instead of a more uniformly spread-out mass, one needs to specify a penalty term $\varepsilon>0$ that is much smaller than that in the pairwise setting, which not only can make it difficult for the algorithm to converge, but also bring stability issues.

Therefore, we recommend a pairwise comparisons approach to reduce the computational burden in very high-dimensional settings with many treatment arms. Still, we phrase and prove all results in the general multimarginal setting.

\section{Causal effects via optimal transport matching}\label{sec:ATE}
This section introduces matching via optimal transportation as a natural framework to minimize bias in weighting approaches for estimating average treatment effects. We rely on the potential outcome notation \citep{rubin1974estimating}. Let $Y(j)$ be the potential outcome under the causal inference framework for treatment $j\in \{1,...,J\}$, $T$ be the treatment assignment, and random variable $D$ supported on $\{\tau\in \{0,1\}^J: \sum_{j=1}^J \tau_j=1\}$ be the binary vector of the treatment assignment: $T=j$ if and only if $D(j)=1$. $X\in\mathbb{R}^d$ is the vector of observed covariates. 

We make the following set of assumptions, the first three standard in the literature \citep{abadie2006large, imbens2000role, imbens2004nonparametric}.
\begin{assumption}\label{ass:causal}
For $j=1,..,J$,
\begin{enumerate}
    \item (Weak unconfoundedness) $D(j) \independent Y(j)|X$,
    \item (Overlap) $P(D(j)=1|X)$ is bounded away from 0 and 1,
    \item Conditional on $D(j)=1$, the sample is independently drawn from $(Y,X)|D$, and the marginal $X\sim \mu_j$ given $D(j)=1$,
    \item Given $D(j)=1$, there exists a bounded and continuous function $f_j$ supported on a convex and compact set $\mathcal{X}_j\subset \mathbb{R}^d$ such that $f_j(X) = \mathbb{E}(Y|T=j,X)$. For a more explicit notation, we use $y(j,X):=f_j(X)$ and $\varepsilon_i = Y_i-y(T_i,X_i)$ are i.i.d. mean-zero noise terms.
\end{enumerate}
\end{assumption}
Parts 1 and 2 of Assumption \ref{ass:causal} are standard. The weak unconfoundedness assumption formalizes the idea of matching on observables: it guarantees that there are no more systematic differences between the unobservable covariates of the groups once the observable characteristics have been accounted for. The weak unconfoundedness assumption was introduced in \citet{imbens2000role} and is slightly weaker than the strong unconfoundedness assumption introduced in \citet{rosenbaum1983central} since it is locally constrained for every realization of the treatment assignment. The overlap assumption (Part 2 of Assumption \ref{ass:causal}) guarantees that one can find appropriate matches in each group. If this were not the case, the estimate of the causal effect would be biased. It is a fundamental assumption for any matching approach \citep{stuart2010matching}. Part 3 of Assumption \ref{ass:causal} states that all treatment groups are independently sampled; a similar assumption is made in classical matching approaches \citep{abadie2006large}. Part 4 introduces regularity assumptions that simplify the derivations of our statistical results; in particular, the boundedness assumption can be relaxed. In general, these regularity assumptions are weaker than most existing assumptions in the literature, as we do not specify any parametric restrictions on the relationship between $Y$ and $X$ like linearity or convexity and only require additive separability analogously to \citet{abadie2006large}. 

Under part 3 in Assumption \ref{ass:causal}, we can denote $X_j := X|D(j)=1$, $X_j\sim \mu_j$ and $\gamma$ as the joint coupling over $\prod_{j=1}^J\mathcal{X}_j$ with the marginal distributions $(\mu_1,...,\mu_J)$ obtained via the proposed matching procedure. The expected counterfactual outcome $\mathbb{E}[Y(t)]$ for treatment $T=t$ can then be written as $\mathbb{E}[Y(t)] = \mathbb{E}_{X}\left[ \mathbb{E}(Y(t)|D(t)=1,X) \right] = \mathbb{E}_X[y(t,X)]$.

The idea of the counterfactual notation is that one can only observe one counterfactual outcome $Y(j)$ in realized samples. This is the ``fundamental problem of causal inference'' \citep{holland1986statistics}: for each individual $i=1,\ldots, N$ the researcher only observes one potential outcome $Y_i(j)$ for the given treatment $T=j$ that the individual was assigned to. In order to compute the causal effect of a treatment, the researcher should compare $Y_i(j)$ to $Y_i(k)$, $k\neq j$ for each $i$, which is impossible. The idea of matching on covariates is hence to impute the counterfactual outcomes $Y_i(k)$ for individual $i$ by considering the outcomes of other individuals that received a different treatment $k$. The unconfoundedness assumption implies that individuals with similar observable covariates also possess similar unobservable characteristics, which makes matching on observable covariates a feasible approach.

This leads to the following definition of the potential outcome imputed by our proposed method.

\begin{definition}\label{def:matching_po1}
For the $j$-th treatment and $t\neq j$ denote $\gamma_{j|t}$ as the conditional measure of covariates in group $j$ given the covariates in group $t$ under the joint distribution $\gamma$. 
Then under Assumption \ref{ass:causal}, the expected potential outcome can be expressed in the sample version as
\begin{equation}\label{eq:sample_po}
    \hat{\mathbb{E}}_N \left[\hat{Y}(j)\right] = \frac{1}{N}\sum_{i=1}^N Y_i I(D_i(j)=1) + 
    \frac{1}{N}\sum_{i=1}^N \sum_{t\neq j} \sum_{k\neq i} Y_k \hat{\gamma}_{N,j|t}(X_k|X_i)I(D_i(t)=1),
\end{equation}
where $N = \sum_{j=1}^J N_j$ is the overall number of sample points over all treatment arms and $\hat{\gamma}_N$ is the empirical counterpart to the optimal matching estimated via the generalized Sinkhorn algorithm \eqref{eq:sinkhornconv} by replacing $\mu_j$ with the empirical measures $\hat{\mu}_{N_j}$ defined below in \eqref{eq:empmeas}.
\end{definition}
The above definition is designed for the multiple treatment setting. For binary treatments, the expression \eqref{eq:sample_po} is the empirical analogue of \[\mathbb{E}\left[\hat{Y}(1)\right] = \mathbb{E}\left[\mathbb{E}\left[\hat{Y}(1)|T=1,X\right] + \mathbb{E}\left[\hat{Y}(1)|T=0,X\right]\right].\] It is derived by replacing the population measures $\mu_j$ by empirical measures. From this, the average treatment effect or average treatment effects on the treated can be obtained. In the same manner, we can provide estimators of average treatment effects of subgroups like the average treatment effect on the treated, or conditional average treatment effects. We now provide the asymptotic statistical properties of this estimator. 

\section{Asymptotic statistical properties of the optimal matching}\label{sec:stats}
We now provide asymptotic statistical properties of the estimator $\hat{\gamma}_N$ of the optimal matching $\gamma$ obtained by solving \eqref{eq:FOC} and using the characterization \eqref{eq:optim_measure}. As mentioned, estimating the weights $\gamma$ via the classical optimal transport problem is known to suffer from a statistical curse of dimensionality. However, we show below that for fixed $\varepsilon>0$, the unbalanced optimal transport problem does not suffer from this curse and obeys a central limit theorem at the parametric rate. This makes it convenient for practical applications, especially in high-dimensional settings.

The results in these section complement the analytical and statistical results in \citet{sejourne2019sinkhorn}, who \emph{inter alia} proved that the statistical estimator of the value function of the unbalanced optimal transport problem is consistent at the parametric rate. We show that even the empirical process of the optimizer, i.e.~the optimal matching, converges at the parametric rate to a tight limit process with finite (but potentially degenerate) variance if we replace the population measures $\mu_j$ by their empirical counterparts. This result also complements the recently introduced asymptotic linearity result in \citet{harchaoui2020asymptotics} for the classical balanced entropy-regularized optimal transport problem. There, the authors focus on the two-treatment setting (but their results can be extended to the multimarginal setting) under the assumption that both treatment arms contain the same number of observations. In contrast to our result, they also derive expressions for the asymptotic variance via Gaussian chaos expansions. \citet{klatt2020empirical}, also in the binary setting, show asymptotic normality of the optimal matching of the classical entropy regularized optimal transport matching and validity of the bootstrap in the setting where all distributions are purely discrete. So far, analogous results for the unbalanced optimal problem have been absent and the results in this section are designed to take a first step to fill this gap. 

To set the notation, we define the empirical measures 
\begin{equation}\label{eq:empmeas}
    \hat{\mu}_{N_j} \coloneqq \frac{1}{N_j}\sum_{n_j=1}^{N_j} \delta_{X_{n_j}}
\end{equation} 
for $j=1,\ldots, J$, where $\delta_{X_{n_j}}$ denotes the Dirac measure at the observation $X_{n_j}$. This allows for differing sample sizes $N_j$ in each cohort. 

Throughout, we assume that we observe independent and identically distributed draws $X_{n_j}$ from the respect distributions $\mu_j$. For notational convenience we will index estimators of functional relationships $f$ that depend on all $j$ elements as $\hat{f}_N$, where it is implicitly understood that $N\equiv \bigcup_{j=1}^J N_j$ is the overall number of data points across all groups. We require the following assumptions for our statistical analysis.
\begin{assumption}\label{ass:iid}
For all $j=1,\ldots, J$ the observable random variables $\left\{X_{n_j}\right\}_{n_j=1,\ldots,N_j}$ are independent and identically distributed draws from the measure $\mu_j$.
\end{assumption}
\begin{assumption}\label{ass:support}
For all $j=1,\ldots, J$ the supports $\mathcal{X}_j\subset\mathbb{R}^d$ of $\mu_j$ are convex and compact.
\end{assumption}
\begin{assumption}\label{ass:asybal}
The samples are asymptotically balanced, i.e.~there exist constants $\rho_1,\rho_2,...,\rho_J>0$ such that $\frac{N_j}{\sum_j N_j} \to \rho_j$ as $N_j\to\infty$ for all $j=1,\ldots, J$.
\end{assumption}
To ensure the regularity and uniqueness of the optimal potentials in the problem, we assume more structure on the function $\phi$ in the definition of the divergence $D_\phi$. The following assumption is satisfied by many functions used for divergences, such as the Kullback-Leibler- and power entropies \citep{sejourne2019sinkhorn}. For other forms of divergences such as the total variation or even the classical balanced entropy regularized optimal transport problem, uniqueness of the potentials is no longer guaranteed \citep[section 3.3.4]{sejourne2019sinkhorn}, but similar results can be obtained by different proof methods.
\begin{assumption}\label{ass:div}
$\phi^*$, the convex conjugate of $\phi$, is strictly convex and infinitely often differentiable. Moreover, there exists a sequence $\{y_k\}_{k=1}^\infty$ in the domain of $\phi^*$ such that $\frac{\dd}{\dd y}\phi^*(y_k)$ converges either to $+\infty$ or zero.
\end{assumption}

\begin{assumption}\label{ass:cost}
The cost function $c:\left(\mathcal{X}_1,\ldots,\mathcal{X}_j\right)\to\mathbb{R}^+$ is infinitely often differentiable. 
\end{assumption}
The strong regularity on the cost function is important as the regularity transfers onto the optimal potentials. The smoothness of the potentials is crucial for a parametric rate of convergence of the optimal couplings $\gamma^*_{N}$, as it provides a Donsker class for all dimensions $d$ \citep{wellner2013weak}. A standard example for a cost function satisfying Assumption \ref{ass:cost} is the squared Euclidean distance function, which is the most natural for a causal matching approach.  

\citet{sejourne2019sinkhorn} have shown that the generalized Sinkhorn algorithm solves the unbalanced optimal transport problem and that the potentials vary continuously with the input measures. We repeat the results for the consistency and convergence of the generalized Sinkhorn algorithm from \citet{sejourne2019sinkhorn} here in our setting. The proofs are analogous to theirs, as their arguments can be directly extended to the multimarginal setting. 

\begin{proposition}\label{prop:consistency}
Under Assumptions \ref{ass:iid} -- \ref{ass:cost}, for fixed measures $\{\mu_{j}\}_{j=1}^J$, the sequence of potentials $\{\hat{\varphi}_j^m\}_{j=1}^J$ in the generalized Sinkhorn algorithm \eqref{eq:sinkhornconv} converges to the true optimal potentials $\{\varphi_j^*\}_{j=1}^J$ as $m\to\infty$. Moreover, the optimal potentials $\{\hat{\varphi}_{N_j}^*\}_{j=1}^J$ solving the empirical analogues of the generalized Sinkhorn algorithm \eqref{eq:sinkhornconv}, where the measures $\mu_j$ are replaced by their empirical counterparts $\hat{\mu}_{N_j}$, converge uniformly in probability to $\{\varphi^*_j\}_{j=1}^J$. Furthermore, the optimal empirical couplings $\hat{\gamma}_{N}^*$ converge weakly in probability to $\gamma^*$.
\end{proposition}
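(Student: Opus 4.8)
The plan is to prove the three claims sequentially, leveraging the structure of the generalized Sinkhorn iteration \eqref{eq:sinkhornconv} and the fixed-point characterization \eqref{eq:FOC}. For the first claim — convergence of the iterates $\{\hat{\varphi}_j^m\}$ to the optimal potentials $\{\varphi_j^*\}$ for fixed measures — I would invoke the argument of \citet[Theorem 1]{sejourne2019sinkhorn} directly: on the compact supports guaranteed by Assumption \ref{ass:support}, with the smooth (hence Lipschitz) cost of Assumption \ref{ass:cost} and the strict convexity and regularity of $\phi^*$ from Assumption \ref{ass:div}, the generalized Sinkhorn map is a contraction (or at least a strictly monotone operator with a unique fixed point) on an appropriate normalized space of continuous potentials. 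Compactness of $\prod_j \mathcal{X}_j$ lets the two-marginal proof of \citet{sejourne2019sinkhorn} carry over verbatim to the multimarginal block-coordinate ascent in \eqref{eq:sinkhornconv}; the normalization of the additive constants $\lambda_{\varphi_j}$ removes the only source of non-uniqueness. I would state this as a direct citation with the remark that the extension is routine.

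For the second claim — uniform-in-probability convergence of the empirical optimizers $\{\hat{\varphi}_{N_j}^*\}$ to $\{\varphi_j^*\}$ — the key input is the \emph{stability of the optimal potentials with respect to the input marginals}, again established in \citet{sejourne2019sinkhorn}: the map $(\mu_1,\ldots,\mu_J)\mapsto(\varphi_1^*,\ldots,\varphi_J^*)$ is continuous when the marginals are topologized by weak convergence and the potentials by the sup norm. Since the $\mathcal{X}_j$ are compact, Varadarajan's theorem (or the Glivenko–Cantelli property) gives $\hat{\mu}_{N_j}\rightsquigarrow\mu_j$ almost surely, hence $\hat{\mu}_{N_j}\rightsquigarrow\mu_j$ in probability; composing with the continuous map and using the continuous mapping theorem yields $\|\hat{\varphi}_{N_j}^*-\varphi_j^*\|_\infty\to 0$ in probability. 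One technical point to handle carefully is that the stability statement must be uniform across the finitely many marginals simultaneously, but since $J$ is fixed this is immediate from a union bound.

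The third claim — weak-in-probability convergence of $\hat{\gamma}_N^*$ to $\gamma^*$ — follows by pushing the potential convergence through the explicit exponential-tilting formula \eqref{eq:optim_measure}. Write $\dd\hat{\gamma}_N^* = \exp\!\big(\varepsilon^{-1}(\sum_j\hat{\varphi}_{N_j}^* - c)\big)\,\dd\bigotimes_j\hat{\mu}_{N_j}$ and $\dd\gamma^* = \exp\!\big(\varepsilon^{-1}(\sum_j\varphi_j^* - c)\big)\,\dd\bigotimes_j\mu_j$. For any bounded continuous test function $g$ on $\prod_j\mathcal{X}_j$, the integral $\int g\,\dd\hat{\gamma}_N^*$ equals $\int g\cdot\exp(\varepsilon^{-1}(\sum_j\hat{\varphi}_{N_j}^*-c))\,\dd\bigotimes_j\hat{\mu}_{N_j}$; on the compact domain the exponent is uniformly bounded, so the integrand converges uniformly in probability to $g\cdot\exp(\varepsilon^{-1}(\sum_j\varphi_j^*-c))$, and $\bigotimes_j\hat{\mu}_{N_j}\rightsquigarrow\bigotimes_j\mu_j$ in probability. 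Combining uniform convergence of the integrand with weak convergence of the product measures (a standard "converging-integrand" lemma, e.g.\ from \citet{wellner2013weak}) gives $\int g\,\dd\hat{\gamma}_N^*\to\int g\,\dd\gamma^*$ in probability for every such $g$, which is precisely weak convergence in probability. The main obstacle is purely one of bookkeeping — verifying that the stability/continuity results of \citet{sejourne2019sinkhorn}, stated in the two-marginal case, transfer to the multimarginal block-coordinate scheme — rather than any genuinely new estimate; compactness of the supports does all the heavy lifting and is exactly why a parametric-rate refinement becomes available in the subsequent central limit theorem.
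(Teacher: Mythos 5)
Your proposal is correct and follows essentially the same route as the paper: citing the convergence and stability results of \citet{sejourne2019sinkhorn} (extended routinely to the multimarginal case on compact supports), applying Varadarajan's theorem for the empirical measures, and passing the potential convergence through the exponential-tilting formula \eqref{eq:optim_measure} via a continuous-mapping argument. No gaps.
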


We now provide the main results of this section: a statistical analysis of the asymptotic properties of the optimal matching generated by the generalized Sinkhorn algorithm. To simplify the subsequent statements, we define the operator $T(\varphi)\coloneqq (T_1(\varphi),\ldots, T_J(\varphi))$ with
\[T_j(\varphi)(x_j) = \varphi_j(x_j) - \text{aprox}_{\phi^*}^\varepsilon \left(\varepsilon\log\int_{\mathcal{X}_{-j}}\exp\left(\frac{\sum_{i\neq j}\varphi_i(x_i)-c(x)}{\varepsilon}\right)\dd\bigotimes_{i\neq j}\mu_i(x_{-j})\right),\] which corresponds to the generalized Sinkhorn setting \eqref{eq:FOC}. We take its domain to be $\prod_{j=1}^J C(\mathcal{X}_j)$, which is a Banach space under the supremum norm \[\|\varphi\|_{\infty} \coloneqq \sum_{j=1}^J \|\varphi_j\|_{\infty} = \sum_{j=1}^J\sup_{x_j\in\mathcal{X}_j}  \left\lvert \varphi_j(x_j)\right\rvert. \]
We also denote $\hat T_N(\varphi) \equiv \left(\hat{T}_{N_1}(\varphi),\ldots,\hat{T}_{N_J}(\varphi)\right)$ with \[\hat{T}_{N_j}(\varphi) = \varphi_j(x_j) - \text{aprox}_{\phi^*}^\varepsilon \left(\varepsilon\log\int_{\mathcal{X}_{-j}}\exp\left(\frac{\sum_{i\neq j}\varphi_i(x_i)-c(x)}{\varepsilon}\right)\dd\bigotimes_{i\neq j}\hat{\mu}_{N_i}(x_{-j})\right).\]

As a shorthand, and following \cite{sejourne2019sinkhorn}, we write
\[T_j(\varphi)(x_j) = \varphi_j(x_j) -  \text{aprox}_{\phi^*}^\varepsilon\left(-\text{Smin}_j^\varepsilon\left(c-\sum_{i\neq j}\varphi_i\right)\right),\] where
\[\text{Smin}_j^\varepsilon\left(c-\sum_{i\neq j}\varphi_i\right) \coloneqq -\varepsilon\log\int_{\mathcal{X}_{-j}}\exp\left(\frac{\sum_{i\neq j}\varphi_i-c}{\varepsilon}\right)\dd\bigotimes_{i\neq j}\mu_i\] is the ``soft-min'' operator. Based on these definitions we have the following result for the asymptotic distribution of the optimal empirical potentials $\hat{\varphi}^m_{N_j}$ obtained in the $m$-th iteration of the IPFP \eqref{eq:sinkhornconv}. 

\begin{theorem}\label{thm:potential}
Let Assumptions \ref{ass:iid} -- \ref{ass:cost} hold. Then for every $(N_1,\ldots, N_J)$ there exists a number of iterations $m\in\mathbb{N}$ in the IPFP \eqref{eq:sinkhornconv} such that
\[\left(\sqrt{N_1}\left(\hat{\varphi}^m_{N_1} - \varphi_1^*\right),\ldots, \sqrt{N_J}\left(\hat{\varphi}^m_{N_J}-\varphi_J^*\right)\right) \rightsquigarrow - \left(T'(\varphi^*)\right)^{-1}(Z),\] 
where $\rightsquigarrow$ denotes weak convergence and $Z$ is the tight mean-zero Gaussian limit element of the form
\[\left(\sqrt{N_1} \left(\hat{T}_{N_1}(\varphi^*) - T_1(\varphi^*)\right),\ldots, \sqrt{N_J} \left(\hat{T}_{N_J}(\varphi^*) - T_J(\varphi^*)\right)\right) \]
in $\ell^\infty\left(B_1\left(\prod_jC^\infty_c(\mathcal{X}_j)\right)\right)$, the space of all bounded functions on the unit ball of all smooth functions $v\equiv (v_1,\ldots, v_J)$ compactly supported $\prod_j\mathcal{X}_j$. The bounded linear operator $T'(\varphi)(u)\coloneqq \left(T'_j(\varphi)(u),\ldots, T_J'(\varphi)(u)\right): \prod_j C(\mathcal{X}_j) \to \prod_jC(\mathcal{X}_j)$ with
\[T'_j(\varphi)(u) = u_j +\varepsilon Y\cdot\frac{\int_{\mathcal{X}_{-j}}\exp\left(\frac{\sum_{i\neq j}\varphi_i-c}{\varepsilon}\right)\sum_{i\neq j}u_i\dd\bigotimes_{i\neq j}\mu_i}{\int_{\mathcal{X}_{-j}}\exp\left(\frac{\sum_{i\neq j}\varphi_i-c}{\varepsilon}\right)\dd\bigotimes_{i\neq j}\mu_i},\] where
\[ Y \equiv \frac{{\phi^*}'\left[\mathrm{aprox}_{\phi^*}^\varepsilon\left(-\mathrm{Smin}_j^\varepsilon\left(c-\sum_{i\neq j}\varphi_i\right)\right)\right]}{{\phi^*}'\left[\mathrm{aprox}_{\phi^*}^\varepsilon\left(-\mathrm{Smin}_j^\varepsilon\left(c-\sum_{i\neq j}\varphi_i\right)\right)\right]+ \varepsilon {\phi^*}''\left[\mathrm{aprox}_{\phi^*}^\varepsilon\left(-\mathrm{Smin}_j^\varepsilon\left(c-\sum_{i\neq j}\varphi_i\right)\right)\right]}\] 
is the Fr\'echet-derivative, $(T'(\varphi))^{-1}(u)$ is its inverse, and ${\phi^*}''$ is the second derivative of $\phi^*$. The inverse exists and is bounded for $0<\varepsilon<1 $ if we equip the image space of $T'(\varphi)(\cdot)$ with the product norm $\|v\|_m\coloneqq \max_{j} \|v_j\|_\infty$. The classical bootstrap procedure is valid.  
\end{theorem}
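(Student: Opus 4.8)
The plan is to view the estimation of the optimal potentials as a $Z$-estimation problem in the Banach space $\prod_{j=1}^J C(\mathcal{X}_j)$ with estimating map $T$, and then to apply a functional delta method for $Z$-estimators (in the spirit of \citet{wellner2013weak}). \emph{Reduction to the exact empirical solution.} The population potential $\varphi^*$ is characterized by $T(\varphi^*)=0$, i.e.~by the first-order conditions \eqref{eq:FOC}, while the exact empirical optimum $\hat\varphi^*_N$---which exists and is unique under Assumption~\ref{ass:div}, being the $m\to\infty$ limit of \eqref{eq:sinkhornconv} with $\mu_j$ replaced by $\hat\mu_{N_j}$---satisfies $\hat T_N(\hat\varphi^*_N)=0$. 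By Proposition~\ref{prop:consistency} the iterates $\hat\varphi^m_N$ converge to $\hat\varphi^*_N$ as $m\to\infty$, and under Assumptions~\ref{ass:div} and \ref{ass:cost} this convergence is geometric on the compact supports (cf.~\citet{sejourne2019sinkhorn}; see also \citet{di2020optimal, carlier2021linear} in the balanced case), so for each realized $(N_1,\ldots,N_J)$ one can pick $m=m(N_1,\ldots,N_J)$---of at most logarithmic size---with $\bigl\|\hat\varphi^m_N-\hat\varphi^*_N\bigr\|_\infty=o_P\bigl(\min_j N_j^{-1/2}\bigr)$. It therefore suffices to prove the stated limit for $\hat\varphi^*_N$ and transfer back at the end.

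\emph{Smoothness of $\varphi^*$ and a uniform CLT.} First, reading the fixed-point identity $\varphi_j^*=-\mathrm{aprox}_{\phi^*}^\varepsilon\bigl(-\mathrm{Smin}_j^\varepsilon(c-\sum_{i\neq j}\varphi_i^*)\bigr)$ as a smooth functional of smooth objects bootstraps $\varphi_j^*\in C^\infty(\mathcal{X}_j)$ with all derivatives bounded on the compact $\mathcal{X}_j$ (Assumptions~\ref{ass:div},~\ref{ass:cost}). Next, the only randomness in $\hat T_N(\varphi^*)-T(\varphi^*)$ enters through the replacement of $\bigotimes_{i\neq j}\mu_i$ by $\bigotimes_{i\neq j}\hat\mu_{N_i}$ inside $\mathrm{Smin}_j^\varepsilon$; a first-order expansion of the normalized log-integral and of the outer $\mathrm{aprox}_{\phi^*}^\varepsilon$ operator, paired with a test function $v_j\in B_1(C^\infty_c(\mathcal{X}_j))$, rewrites this quantity as a finite sum of integrals $\int\kappa_{i,j,v}\,\mathrm{d}(\hat\mu_{N_i}-\mu_i)$ against $C^\infty$ kernels $\kappa_{i,j,v}$ that are uniformly bounded in every $C^s$ norm, up to a remainder of order $o_P(N^{-1/2})$ controlled by second-order smoothness. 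Since classes of uniformly smooth functions on compact convex subsets of $\mathbb{R}^d$ are Donsker in every dimension (bracketing entropy of order $\delta^{-d/s}$, whose square root is integrable once $s>d/2$ and a fortiori for the $C^\infty$ class; \citet{wellner2013weak}), each $\sqrt{N_i}(\hat\mu_{N_i}-\mu_i)$ converges to a tight Brownian bridge; independence across cohorts (Assumption~\ref{ass:iid}) and the asymptotic balance $N_i/N\to\rho_i$ (Assumption~\ref{ass:asybal}) then assemble these into the stated tight mean-zero Gaussian element $Z$ at rate $\sqrt{N}$, viewed in $\ell^\infty\bigl(B_1(\prod_j C^\infty_c(\mathcal{X}_j))\bigr)$.

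\emph{Differentiability and bounded invertibility of $T$ at $\varphi^*$.} Differentiating $\mathrm{aprox}_{\phi^*}^\varepsilon$ by the implicit function theorem---its derivative is the scalar factor $Y$, which is well-defined and lies in $[0,1]$ by strict convexity and smoothness of $\phi^*$---and differentiating the normalized log-integral---whose derivative is the conditional-expectation-type averaging operator in the statement---yields the bounded linear operator $T'(\varphi^*)$; $\hat T_N$ is differentiable with the analogous derivative, which converges to $T'(\varphi^*)$ in operator norm in probability by Assumptions~\ref{ass:iid}--\ref{ass:asybal}. Writing $T'(\varphi^*)=I+K$ with $K$ the composition of multiplication by $\varepsilon Y$ and the positive averaging operator, a careful norm estimate---here the mixed norms of the statement ($\sum_j\|\cdot\|_\infty$ on the domain, $\max_j\|\cdot\|_\infty$ on the image) are chosen precisely so that the relevant contraction constant stays strictly below one for $0<\varepsilon<1$---shows that $I+K$ is boundedly invertible, uniformly in large $N$ for $\hat T_N'$ as well.

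\emph{Conclusion and bootstrap.} The Donsker property also delivers the stochastic equicontinuity $\sqrt N\bigl[(\hat T_N-T)(\hat\varphi^*_N)-(\hat T_N-T)(\varphi^*)\bigr]=o_P(1)$; combining it with consistency (Proposition~\ref{prop:consistency}), the differentiability and invertibility just established, and $\hat T_N(\hat\varphi^*_N)=0=T(\varphi^*)$, the master theorem for $Z$-estimators gives
\[\sqrt N\bigl(\hat\varphi^*_N-\varphi^*\bigr)=-\bigl(T'(\varphi^*)\bigr)^{-1}\sqrt N\,(\hat T_N-T)(\varphi^*)+o_P(1)\rightsquigarrow -\bigl(T'(\varphi^*)\bigr)^{-1}(Z),\]
which then transfers to $\hat\varphi^m_N$ via the reduction in the first paragraph. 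For the bootstrap, the same smooth kernel classes are bootstrap-Donsker and $T$ is Hadamard (hence Fr\'echet) differentiable with continuously invertible derivative, so the bootstrap functional delta method for $Z$-estimators \citep{wellner2013weak} applies and the bootstrapped potentials share the conditional limit $-(T'(\varphi^*))^{-1}(Z)$ in probability. The hardest part is the differentiability--invertibility step: rigorously differentiating the nested composition $\mathrm{aprox}_{\phi^*}^\varepsilon\circ(-\mathrm{Smin}_j^\varepsilon)$ in function space with uniform control of all remainders, and obtaining bounded invertibility of $I+K$ in the precise mixed-norm setting that must accommodate $J\ge 2$ marginals over the full range $0<\varepsilon<1$; by comparison the empirical-process limit is routine once the $C^\infty$ regularity of $\varphi^*$ is in hand.
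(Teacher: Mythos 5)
Your proposal is correct and follows essentially the same route as the paper: it casts the problem as $Z$-estimation in $\prod_j C(\mathcal{X}_j)$, establishes the Donsker property of the smooth function classes via bracketing entropy on compact supports, computes the Fr\'echet derivative $T'(\varphi^*)=I+K$ with the factor $Y<1$ and inverts it by a Neumann-series argument for $0<\varepsilon<1$, and concludes (including the bootstrap) from the master theorem for $Z$-estimators. The only cosmetic difference is that you reduce to the exact empirical zero $\hat T_N(\hat\varphi_N^*)=0$ and transfer back via geometric convergence of the iterates, whereas the paper directly imposes the approximate-zero condition $\hat T_N(\hat\varphi_N^m)=o_P(N^{-1/2})$ on the choice of $m$; these are equivalent.
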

The number of iterations $m$ needs to be chosen such that 
\[\hat{T}_{N_j}(\hat{\varphi}^m_N) = o_P(N^{-\frac{1}{2}}),\] where $o_P$ is the standard Landau symbol and we always understand statements about probabilities in terms of outer probability \citep[Section 1.2]{wellner2013weak}. Theorem \ref{thm:potential} shows that the potential functions obtained via the IPFP possess excellent statistical properties. In particular, the proof of Theorem \ref{thm:potential} also shows that a standard bootstrap procedure, i.e.~a resampling procedure of the observable data with replacement, provides correct results in the sense that the bootstrapped estimator $\tilde{\varphi}_N^{m}$ satisfies 
\[\left(\sqrt{N_1}\left(\tilde{\varphi}^m_{N_1} - \hat{\varphi}^m_{N_1}\right),\ldots, \sqrt{N_J}\left(\tilde{\varphi}^m_{N_J}-\hat{\varphi}^m_{N_J}\right)\right) \] weakly converges to a tight Gaussian limit process conditional on the observed data like the standard empirical process
\[\left(\sqrt{N_1}\left(\hat{\varphi}^m_{N_1} - \varphi_1^*\right),\ldots, \sqrt{N_J}\left(\hat{\varphi}^m_{N_J}-\varphi_J^*\right)\right).\] This follows from Theorem 13.4 in \citet{kosorok2008introduction}. We refer to \citet{kosorok2008introduction} and chapter 3.6 in \citet{wellner2013weak} for a detailed analysis of statistical properties of different bootstrap procedures.

Theorem \ref{thm:potential} is the main step in the asymptotic properties of the optimal matching $\gamma^*_\varepsilon$. Recall that the optimal matching $\gamma^*_\varepsilon$ can be derived from the potentials $\varphi^*$ via 
   \[ \gamma^*_\varepsilon(A) = \int_{A}\exp\left(\frac{\sum_{J=1}^J\varphi^*_j(x_j) - c(x)}{\varepsilon}\right)\dd\bigotimes_j\mu_j(x)\]
for any measurable set $A$, i.e.
\[K^*(x)\coloneqq \exp\left(\frac{\sum_{J=1}^J\varphi^*_j(x_j) - c(x)}{\varepsilon}\right)\] is the Radon-Nikodym derivative of $\gamma^*_\varepsilon$ with respect to $\bigotimes_j\mu_j$. The asymptotic behavior of the Radon-Nikodym derivative is straightforward to obtain via the delta method \citep[chapter 3.9]{wellner2013weak} as in the following proposition.
\begin{corollary} \label{corr:density}
    Let Assumptions \ref{ass:iid} -- \ref{ass:cost} hold. Then for every $(N_1,\ldots, N_J)$ with $N_j=\rho_jN$ there exists a number of iterations $m\in\mathbb{N}$ in the IPFP \eqref{eq:sinkhornconv} such that
    \begin{multline*}
        \sqrt{N} \left(\exp\left(\frac{\sum_{J=1}^J\hat{\varphi}_{N_j}^m - c}{\varepsilon}\right) - \exp\left(\frac{\sum_{J=1}^J\varphi^*_j - c}{\varepsilon}\right)\right) \rightsquigarrow
        -\exp\left( \frac{\sum_j\varphi^*_j-c}{\varepsilon} \right) \frac{\sum_{j=1}^J \rho_j^{-1}Z_{\varphi,j}}{\varepsilon}
    \end{multline*}
    where 
    \[Z_\varphi\equiv(Z_{\varphi,1},\ldots,Z_{\varphi,J})\coloneqq -\left(T'(\varphi^*)\right)^{-1}(Z),\] $Z$ is the mean zero Gaussian limit process from Theorem \ref{thm:potential}. The limit process is still mean zero Gaussian as it is a linear map of $Z_\varphi$.
\end{corollary}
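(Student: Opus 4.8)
The plan is to obtain Corollary~\ref{corr:density} from Theorem~\ref{thm:potential} by a single application of the functional delta method, so that no new probabilistic input is needed. First I would fix $(N_1,\ldots,N_J)$ with $N_j=\rho_jN$ and keep the iteration count $m$ supplied by Theorem~\ref{thm:potential} (the one ensuring $\hat T_{N_j}(\hat\varphi^m_N)=o_P(N^{-1/2})$). Since $N_j=\rho_jN$, multiplying the $j$-th coordinate of the conclusion of Theorem~\ref{thm:potential} by $\sqrt{N/N_j}$ converts the coordinatewise $\sqrt{N_j}$-normalization into the common $\sqrt N$-normalization, so that
\[\big(\sqrt N(\hat\varphi^m_{N_1}-\varphi_1^*),\ldots,\sqrt N(\hat\varphi^m_{N_J}-\varphi_J^*)\big)\rightsquigarrow\widetilde Z_\varphi,\]
where $\widetilde Z_\varphi$ is the tight mean-zero Gaussian element obtained from $Z_\varphi=-(T'(\varphi^*))^{-1}(Z)$ by the matching coordinatewise rescaling; this is the step in which the constants $\rho_j$ from Assumption~\ref{ass:asybal} enter the limit asserted in the corollary.

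Next I would single out the functional whose value is the Radon--Nikodym derivative $K^*$, namely
\[\Phi\colon(\varphi_1,\ldots,\varphi_J)\longmapsto\exp\!\left(\frac{\sum_{j=1}^J\varphi_j-c}{\varepsilon}\right),\]
viewed as a map from $\prod_jC(\mathcal{X}_j)$ into $C\big(\prod_j\mathcal{X}_j\big)$ (equivalently into the $\ell^\infty$-space over the smooth test-function class of Theorem~\ref{thm:potential}), and verify that it is Fr\'echet differentiable at $\varphi^*$ --- hence Hadamard differentiable tangentially to any subspace --- with derivative
\[\Phi'(\varphi^*)(h_1,\ldots,h_J)=\exp\!\left(\frac{\sum_j\varphi_j^*-c}{\varepsilon}\right)\cdot\frac1\varepsilon\sum_{j=1}^J h_j.\]
This is routine: by Assumptions~\ref{ass:support} and~\ref{ass:cost} the supports $\mathcal{X}_j$ are compact and $c$ is bounded there, and the optimal potentials $\varphi_j^*$ are continuous, so $\exp(\varepsilon^{-1}(\sum_j\varphi_j^*-c))$ is a fixed bounded continuous function on $\prod_j\mathcal{X}_j$; moreover $\Phi$ factors as the bounded affine-linear map $(\varphi_j)_j\mapsto\varepsilon^{-1}(\sum_j\varphi_j-c)$ followed by the Nemytskii operator generated by $\exp$, which is infinitely Fr\'echet differentiable on bounded subsets of $C(\prod_j\mathcal{X}_j)$. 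In particular $\Phi'(\varphi^*)$ is a bounded linear operator --- multiplication by a fixed bounded function after summation and scaling --- so the delta method applies.

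Combining the two ingredients, the functional delta method (\citealp[Chapter~3.9]{wellner2013weak}) gives
\[\sqrt N\big(\Phi(\hat\varphi^m_N)-\Phi(\varphi^*)\big)\rightsquigarrow\Phi'(\varphi^*)\big(\widetilde Z_\varphi\big),\]
which, after writing out $\Phi'(\varphi^*)$, is exactly the displayed limit in Corollary~\ref{corr:density}; it is again mean-zero Gaussian as a continuous linear image of the Gaussian $\widetilde Z_\varphi$. Validity of the bootstrap for $\Phi(\hat\varphi^m_N)$ then follows by combining the bootstrap conclusion of Theorem~\ref{thm:potential} with the delta method for bootstrapped empirical processes (\citealp[Chapter~3.9]{wellner2013weak}; see also \citealp{kosorok2008introduction}). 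I do not expect a genuine obstacle here: the only point needing care is reconciling the per-group $\sqrt{N_j}$-scalings of Theorem~\ref{thm:potential} with the single $\sqrt N$-scaling and checking Hadamard differentiability of $\Phi$ in a topology --- the $\ell^\infty$-space over the smooth test functions --- in which both the conclusion of Theorem~\ref{thm:potential} and the continuity of $\Phi'(\varphi^*)$ hold simultaneously; the remaining differentiability computation is the standard smooth-functional calculus.
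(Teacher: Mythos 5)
Your proposal is correct and follows essentially the same route as the paper: rescale the per-group $\sqrt{N_j}$ normalizations to a common $\sqrt{N}$ via $N_j=\rho_jN$, then apply the functional delta method to the map $\Phi(\varphi)=\exp\bigl(\varepsilon^{-1}(\sum_j\varphi_j-c)\bigr)$, whose derivative at $\varphi^*$ is $h\mapsto \varepsilon^{-1}K^*\sum_jh_j$. The only (immaterial) difference is that the paper verifies Hadamard differentiability directly from the sequential definition rather than via Fr\'echet differentiability of the Nemytskii operator, and your rescaling yields the factors $\rho_j^{-1/2}$, which in fact agrees with the paper's own derivation even though the displayed statement of the corollary writes $\rho_j^{-1}$.
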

Corollary \ref{corr:density} shows that the exponential tilt imposed by $K^*(x)$ also converges at the parametric rate to a Gaussian process. Hence, if all measures $\mu_j$ are known, this implies that the optimal coupling converges at the parametric rate to a Gaussian process. Moreover, the Delta method for bootstrapped measures \citep[section 3.9.3]{wellner2013weak} implies that the standard bootstrap procedure works in this case. 

Corollary \ref{corr:density} provides the statistical properties of the optimal matching $\gamma^*_\varepsilon$ in the case where $\mu_j$ are known. In general, this is not the case and we need to take into account the estimation of the measures $\mu_j$. The following theorem shows that we still have convergence to a tight mean-zero limit process in this case. Also, the classical bootstrap is still valid.
\begin{corollary} \label{corr:matching}
Let Assumptions \ref{ass:iid} -- \ref{ass:cost} hold. Denote 
\[K^*(x)\coloneqq \exp\left(\frac{\sum_{j=1}^J\varphi^*_j(x_j) - c(x)}{\varepsilon}\right)\quad\text{and}\quad \hat K_N(x)\coloneqq \exp\left(\frac{\sum_{j=1}^J\hat\varphi_{N_j}(x_j) - c(x)}{\varepsilon}\right)\]

Denote $\dd\hat\gamma_{N,\varepsilon}(x):=\hat K_N(x) \dd\bigotimes_{j=1}^J\hat\mu_{N_j}(x)$ and $\dd\gamma_\varepsilon(x):=K^*(x)\dd\bigotimes_{j=1}^J\mu_{j}(x)$.
Then for any function $f$ on $\prod_j\mathcal{X}_j$ in some uniformly bounded Donsker class $\mathcal{F}$ 
\[\sqrt{N}\left( \int f(x) \dd\hat\gamma_{N,\varepsilon}(x) - \int f(x) \dd\gamma_\varepsilon(x)  \right) \rightsquigarrow G\]
where $G$ is a mean-zero tight process.
\end{corollary}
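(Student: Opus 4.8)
The plan is to split the increment into a ``potential'' part and a ``marginal'' part and combine them through the functional delta method, viewing $\nu=(\nu_1,\dots,\nu_J)\mapsto\int f\,d\gamma_\varepsilon[\nu]$ --- with $\gamma_\varepsilon[\nu]$ the optimal unbalanced coupling having marginals $\nu_j$ --- as the composition of the marginals-to-potentials map $\nu\mapsto\varphi^*[\nu]$, whose Hadamard differentiability with derivative $-(T'(\varphi^*))^{-1}$ acting on the Gaussian limit $Z$ is the content of Theorem~\ref{thm:potential} and its proof, with the smooth bilinear evaluation $(\varphi,\nu)\mapsto\int f\exp\!\big(\frac{\sum_j\varphi_j-c}{\varepsilon}\big)\,d\bigotimes_{j=1}^J\nu_j$. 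Concretely, I would fix the iteration count $m$ as in Theorem~\ref{thm:potential} (so that $\hat T_{N_j}(\hat\varphi^m_N)=o_P(N^{-1/2})$) and, after adding and subtracting $\int f\,K^*\,d\bigotimes_{j=1}^J\hat\mu_{N_j}$, write
\[\int f\,d\hat\gamma_{N,\varepsilon}-\int f\,d\gamma_\varepsilon=\underbrace{\int f\,(\hat K_N-K^*)\,d\!\bigotimes_{j=1}^J\hat\mu_{N_j}}_{(\mathrm I)}+\underbrace{\int f\,K^*\,d\Big(\!\bigotimes_{j=1}^J\hat\mu_{N_j}-\bigotimes_{j=1}^J\mu_j\Big)}_{(\mathrm{II})}.\]

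For $(\mathrm{II})$ I would first note that by Assumptions~\ref{ass:support} and~\ref{ass:cost} the cost $c$ is bounded and smooth on the compact set $\prod_j\mathcal X_j$ and the optimal potentials $\varphi^*_j$ inherit this regularity, so $K^*$ is a fixed, bounded, smooth function; hence $\{fK^*:f\in\mathcal F\}$ and every ``section'' of it obtained by freezing a subset of the $J$ coordinate blocks is uniformly bounded and Donsker, since multiplication by a fixed bounded function and coordinate projection preserve uniform-entropy bounds. Telescoping the product measure,
\[\bigotimes_{j=1}^J\hat\mu_{N_j}-\bigotimes_{j=1}^J\mu_j=\sum_{j=1}^J\Big(\bigotimes_{i<j}\mu_i\Big)\otimes(\hat\mu_{N_j}-\mu_j)\otimes\Big(\bigotimes_{i>j}\hat\mu_{N_i}\Big),\]
multiplying by $\sqrt N$ and using $N_j/N\to\rho_j$ (Assumption~\ref{ass:asybal}), the $j$-th summand equals $\rho_j^{-1/2}$ times the empirical process $\sqrt{N_j}(\hat\mu_{N_j}-\mu_j)$ evaluated on a section of $fK^*$, which converges to a $\mu_j$-Brownian bridge term, while every summand containing two or more empirical-measure factors is $O_P(N^{-1/2})$ and vanishes. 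Thus $\sqrt N\,(\mathrm{II})$ converges in $\ell^\infty(\mathcal F)$ to a tight mean-zero Gaussian process.

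For $(\mathrm I)$ I would use Corollary~\ref{corr:density}: $\sqrt N(\hat K_N-K^*)$ converges weakly in $C(\prod_j\mathcal X_j)$ to $-\varepsilon^{-1}K^*\sum_j\rho_j^{-1}Z_{\varphi,j}$, while $\bigotimes_{j=1}^J\hat\mu_{N_j}\to\bigotimes_{j=1}^J\mu_j$ weakly in probability (Proposition~\ref{prop:consistency}). Because both $\sqrt N(\hat K_N-K^*)$ and the empirical processes driving $\bigotimes_j\hat\mu_{N_j}$ are continuous images of the \emph{same} marginal empirical processes $\sqrt{N_j}(\hat\mu_{N_j}-\mu_j)$, they converge \emph{jointly}; applying the continuous mapping theorem to the pairing $(g,\nu)\mapsto\big(f\mapsto\int fg\,d\nu\big)$ --- where uniformity over $f\in\mathcal F$ uses that $\mathcal F$ is uniformly bounded and Glivenko--Cantelli and that the limit integrand is a.s.\ bounded and equicontinuous on the compact support --- gives that $\sqrt N\,(\mathrm I)$ converges in $\ell^\infty(\mathcal F)$ to $f\mapsto-\varepsilon^{-1}\int f\,K^*\sum_j\rho_j^{-1}Z_{\varphi,j}\,d\bigotimes_{j=1}^J\mu_j$. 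Adding the two limits (again using joint convergence) yields $\sqrt N(\int f\,d\hat\gamma_{N,\varepsilon}-\int f\,d\gamma_\varepsilon)\rightsquigarrow G$, where $G$ is a bounded linear image of the jointly Gaussian vector $(Z,\mathbb G_{\mu_1},\dots,\mathbb G_{\mu_J})$; it is therefore mean zero, and it is tight because $(T'(\varphi^*))^{-1}$ is bounded for $0<\varepsilon<1$ (Theorem~\ref{thm:potential}), so $Z_\varphi$ and hence $G$ are tight. The same decomposition together with the bootstrap validity asserted in Theorem~\ref{thm:potential} and the delta method for bootstrapped empirical measures \citep[\S3.9.3]{wellner2013weak} shows the classical bootstrap remains valid.

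I expect the main obstacle to be term $(\mathrm I)$: one integrates a \emph{random, only weakly convergent} integrand, $\sqrt N(\hat K_N-K^*)$, against a \emph{random, only weakly convergent} product empirical measure, so neither a plain Slutsky argument nor a fixed-class Donsker theorem applies directly. Making this rigorous requires establishing joint weak convergence of $\big(\sqrt N(\hat K_N-K^*),\sqrt{N_1}(\hat\mu_{N_1}-\mu_1),\dots,\sqrt{N_J}(\hat\mu_{N_J}-\mu_J)\big)$ in the appropriate product space and then a continuous-mapping theorem for the bilinear pairing, for which compactness of the supports, smoothness (hence equicontinuity) of $K^*$, and uniform boundedness of $\mathcal F$ are all used. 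The secondary technical point is verifying that all sectioned classes $\{fK^*(\,\cdot\,,x_{-I}):f\in\mathcal F\}$ remain Donsker --- which follows from preservation of uniform entropy under multiplication by a fixed bounded smooth function and under coordinate projections --- and that the multi-empirical cross terms in the telescoping expansion are genuinely of lower order.
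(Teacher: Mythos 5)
Your decomposition is, after regrouping, the same as the paper's: adding and subtracting $\int fK^*\,\dd\bigotimes_j\hat\mu_{N_j}$ gives your $(\mathrm I)+(\mathrm{II})$, where $(\mathrm{II})$ is the paper's empirical-process term $\mathbb{G}_NfK$ and $(\mathrm I)$ bundles together the paper's two remaining terms $\mathbb{G}_N(f\hat K_N-fK)+\sqrt{N}P(f\hat K_N-fK)$. The ingredients are also shared: Donsker preservation under multiplication by the bounded smooth kernel, and Corollary \ref{corr:density} to handle $\sqrt{N}(\hat K_N-K^*)$. Where you differ is in execution. The paper splits your $(\mathrm I)$ further, kills the cross term $\mathbb{G}_N(f\hat K_N-fK)$ by a stochastic-equicontinuity argument (Lemma 19.24 of van der Vaart, via the Donsker property of $\tilde{\mathcal F}_{Kf}$ and $P(\hat K_Nf-Kf)^2=o_P(1)$), and is then left integrating $\sqrt{N}(\hat K_N-K^*)$ against the \emph{fixed} measure $\bigotimes_j\mu_j$, which makes the continuous-mapping step routine; you instead integrate the weakly convergent integrand against the random product measure and must invoke joint weak convergence plus an extended continuous-mapping theorem for the bilinear pairing -- you correctly identify this as the delicate step, and the paper's split is precisely the standard device for avoiding it. Conversely, your telescoping of $\bigotimes_j\hat\mu_{N_j}-\bigotimes_j\mu_j$ for term $(\mathrm{II})$ is \emph{more} careful than the paper, which treats $\mathbb{G}_NfK$ as a ``standard empirical process'' even though $\mathbb{P}_N$ is a product of $J$ independent empirical measures rather than the empirical measure of an i.i.d.\ sample from $\bigotimes_j\mu_j$; your sectioned-Donsker-class and lower-order-cross-term argument is what a fully rigorous treatment of that term requires. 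One caution: you assert that $G$ is Gaussian as a linear image of the jointly Gaussian vector $(Z,\mathbb{G}_{\mu_1},\ldots,\mathbb{G}_{\mu_J})$, but establishing that joint convergence is exactly the step the authors say they could not complete -- the paper deliberately concludes only that $G$ is mean-zero and tight, which is all the statement requires, so you should either prove the joint convergence or retreat to the weaker conclusion.
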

Corollary \ref{corr:density} and Corollary \ref{corr:matching} show that the optimal balancing possesses converges at the parametric rate to the limit matching. Note that Corollary \ref{corr:matching} is weaker than the central limit result in \citet{harchaoui2020asymptotics} for the classical balanced entropy regularized optimal transport problem, who provide a Gaussian limit and compute the variance for the classical balanced entropy regularized optimal transport problem. We conjecture that the limit process in Corollary \ref{corr:matching} will be Gaussian in general, but we have not managed to show this. Moreover, the variance of the limit process can be trivial, in which case one would need a higher-order expansion of the variance functional similar to \citet{harchaoui2020asymptotics} in the classical balanced case. On the other hand, our result allows for different sample sizes in the different treatment arms, while \citet{harchaoui2020asymptotics} rely on a weighted Monge representation for their results, which requires the same number of observations in each treatment arm. 

The tight limit process $G$ is achieved for bounded Donsker classes $\mathcal{F}$, i.e.~functional classes that are ``small enough'' to be estimable with a relatively small number of observations. This is a standard regularity requirement in the theory of empirical processes. Smooth functions, indicator functions and many other functions fall under this category. For a thorough introduction to Donsker classes we refer to \citet{wellner2013weak}. Corollary \ref{corr:matching} implies the following statistical properties for the induced estimator of the average treatment effect between two groups.
\begin{corollary}\label{cor:conditional}
\mbox{}
\begin{enumerate}
    \item[A.] 
    For any $i,j\in \{1,...,J\}$, $i\neq j$, denote $\hat\gamma(x_i|x_j)$ as the sample version of the conditional distribution $\gamma(x_i|x_j)$, then for any  function $f$ in a uniformly bounded Donsker class and for any $x_j \in\mathcal{X}_j$,
\[\sqrt{N}\left(\int f(x)\dd\hat\gamma(x_i|x_j) - \int f(x)\dd\gamma(x_i|x_j) \right)\rightsquigarrow G_{i|j}\]
where $G_{i|j}$ is a mean-zero tight process.
    \item[B.] 
    In addition, in the binary case $T\in\{0,1\}$, for the imputed potential outcome estimator in Definition \ref{def:matching_po1},
    \[\sqrt{N} \left( \mathbb{\hat E}_\varepsilon \{Y(0)|T=1\} - \mathbb{E}_\varepsilon\{ Y(0)|T=1 \} \right) \rightsquigarrow G_0 \]
    Here, $\mathbb{\hat E}_\varepsilon \{Y(0)|T=1\} =\frac{1}{N_1}\sum_{j=1}^{N_1}\sum_{i=1}^{N_0}Y_i(0)\hat\gamma_{\varepsilon,0|1}(X_{i,T=0}|X_{j,T=1})$ is the imputed potential outcome estimator, with subscription $\varepsilon$ to denote the dependence of $\gamma$ on $\varepsilon$. 
    The population mean under OT mapping, $\mathbb{E}_\epsilon[Y(0)|T=1] := \int_{\mathcal{X}_1} \int_{\mathcal{X}_0}f_0(s)\dd\gamma_{\varepsilon,0|1}(s|x) \dd\mu_1(x)$
    is a biased version of $\mathbb{E}(Y(0)|T=1)$ that depends on $\varepsilon$.
    $G_{0}$ is a mean-zero tight process. 
\end{enumerate}

\end{corollary}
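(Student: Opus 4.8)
The plan is to derive both statements as consequences of Corollary \ref{corr:matching} via the functional delta method. For part A, observe that the conditional measure $\gamma(x_i \mid x_j)$ is, by disintegration, the Radon-Nikodym derivative of the joint law with respect to the $j$-th marginal. Concretely, for fixed $x_j$ and for a test function $f$, the map $\gamma_\varepsilon \mapsto \int f\,\dd\gamma_\varepsilon(x_i\mid x_j)$ can be written as a ratio
\[
\int f(x)\,\dd\gamma_\varepsilon(x_i\mid x_j)
= \frac{\int_{\mathcal{X}_{-j}} f(x)\,K^*(x)\,\dd\bigotimes_{k\neq j}\mu_k(x_{-j})}{\int_{\mathcal{X}_{-j}} K^*(x)\,\dd\bigotimes_{k\neq j}\mu_k(x_{-j})}\Big|_{x_j},
\]
with $K^*$ as in Corollary \ref{corr:matching}. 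Both numerator and denominator are integrals of uniformly bounded functions (since $K^*$ is bounded and continuous on the compact support by Assumptions \ref{ass:support}, \ref{ass:div}, \ref{ass:cost} together with Theorem \ref{thm:potential}) against the empirical marginals, hence fall under the scope of Corollary \ref{corr:matching}: the pair $\bigl(\sqrt{N}(\text{numerator}_N - \text{numerator}), \sqrt{N}(\text{denominator}_N - \text{denominator})\bigr)$ converges jointly to a tight mean-zero process. The denominator is bounded away from zero by the overlap-type positivity of $K^*$ (it is a strictly positive exponential on a compact set), so the ratio map $(a,b)\mapsto a/b$ is Hadamard-differentiable at the relevant point, and the functional delta method (chapter 3.9 of \citet{wellner2013weak}) delivers a tight mean-zero limit $G_{i\mid j}$; bootstrap validity transfers via the delta method for the bootstrap (section 3.9.3 therein).

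For part B, I would first rewrite the population target. By the definition $\mathbb{E}_\varepsilon[Y(0)\mid T=1] = \int_{\mathcal{X}_1}\int_{\mathcal{X}_0} f_0(s)\,\dd\gamma_{\varepsilon,0\mid 1}(s\mid x)\,\dd\mu_1(x)$, integrating the conditional against the marginal $\mu_1$ just returns an integral against the joint: it equals $\int f_0(x_0)\,\dd\gamma_\varepsilon(x)$ where $f_0$ is viewed as a function on $\mathcal{X}_0\times\mathcal{X}_1$ depending only on the $\mathcal{X}_0$-coordinate. Since $f_0$ is bounded and continuous (Assumption \ref{ass:causal}, part 4), it lies in a uniformly bounded Donsker class, so Corollary \ref{corr:matching} applies directly to give $\sqrt{N}\bigl(\int f_0\,\dd\hat\gamma_{N,\varepsilon} - \int f_0\,\dd\gamma_\varepsilon\bigr)\rightsquigarrow$ a tight mean-zero process. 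The remaining issue is that the estimator $\mathbb{\hat E}_\varepsilon\{Y(0)\mid T=1\}$ in Definition \ref{def:matching_po1} uses the observed outcomes $Y_i(0) = f_0(X_{i,T=0}) + \varepsilon_i$ rather than $f_0(X_{i,T=0})$. I would split the estimator as $\frac{1}{N_1}\sum_j\sum_i f_0(X_{i})\hat\gamma_{\varepsilon,0\mid1}(X_i\mid X_j) + \frac{1}{N_1}\sum_j\sum_i \varepsilon_i\,\hat\gamma_{\varepsilon,0\mid1}(X_i\mid X_j)$; the first term is handled by the argument just sketched, and the second, noise, term is a weighted average of i.i.d.\ mean-zero errors with weights that are uniformly $O(1/N_0)$ and converge (by Proposition \ref{prop:consistency}) to deterministic limits, so after scaling by $\sqrt{N}$ it converges to an independent mean-zero Gaussian term by a Lindeberg-type CLT conditional on the covariates; adding the two tight mean-zero limits gives $G_0$.

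The main obstacle I anticipate is handling the conditional-measure ratio uniformly and, in part B, controlling the cross term where the \emph{empirical} conditional weights multiply the \emph{noise}: one must argue that replacing $\hat\gamma_{\varepsilon,0\mid1}$ by its population limit $\gamma_{\varepsilon,0\mid1}$ in the noise term produces only an $o_P(N^{-1/2})$ error. This requires the uniform rate $\|\hat K_N - K^*\|_\infty = O_P(N^{-1/2})$ from Corollary \ref{corr:density}/Theorem \ref{thm:potential} combined with the uniform boundedness of the error second moments (Assumption \ref{ass:causal}, part 4), so that the product of a $O_P(N^{-1/2})$ weight-perturbation and a mean-zero noise average is $o_P(N^{-1/2})$ — a standard but somewhat delicate empirical-process-meets-noise decoupling step. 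Apart from this, everything is an application of the Hadamard differentiability of smooth functionals (ratio, exponential, integration against a fixed bounded function) composed with the master CLT in Corollary \ref{corr:matching}, together with the corresponding delta method for the bootstrap.
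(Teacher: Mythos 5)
Your proposal is correct in substance and reaches the same destination via Corollary \ref{corr:matching}, but it diverges from the paper on two points worth noting. First, in Part A you define the conditional measure as a normalized ratio (numerator over $\int K^*\,\dd\bigotimes_{k\neq j}\mu_k$ at fixed $x_j$) and invoke Hadamard differentiability of $(a,b)\mapsto a/b$. The paper instead works with the \emph{unnormalized} disintegration with respect to $\mu_j$: it shows via Fubini that $g(x_j)=\int_{\mathcal{X}_{-j}} f\,K^*\,\dd\bigotimes_{k\neq j}\mu_k$ satisfies $\int_B g\,\dd\mu_j=\int f\,\mathds{1}_B\,\dd\gamma$ and then applies the decomposition from Corollary \ref{corr:matching} directly, with no ratio and no quotient delta method. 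In the unbalanced setting $\pi_j\gamma\neq\mu_j$, so your normalized object and the paper's object genuinely differ; the paper's convention is the one consistent with Definition \ref{def:matching_po1} and with Part B (where $\dd\hat\gamma_{\varepsilon,0|1}(s|x)=\hat K(s,x)\dd\hat\mu_0(s)$ with no denominator). Your argument is valid for the normalized functional — the denominator is a strictly positive continuous function on a compact set, and joint weak convergence of numerator and denominator follows from the same Donsker-class argument — but it is machinery the paper's version of the statement does not require. Second, in Part B you are more careful than the paper: the paper silently replaces the observed outcomes $Y_i(0)=f_0(X_i)+\eta_i$ by $f_0(X_i)$ when it redefines $\hat{\mathbb{E}}[Y(0)|T=1]$ as an integral of $f_0$ against $\hat K\,\dd\hat\mu_0\,\dd\hat\mu_1$, whereas you explicitly decompose off the noise term, show its weights are $O(1/N_0)$, and handle it with a conditional Lindeberg CLT plus the decoupling step that swapping $\hat\gamma_{\varepsilon,0|1}$ for $\gamma_{\varepsilon,0|1}$ in the noise sum costs only $o_P(N^{-1/2})$. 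That extra step is needed for the estimator as actually stated in the corollary, so your treatment closes a gap the paper leaves open rather than introducing one.
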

Note that in Part B of Corollary \ref{cor:conditional}, under the binary treatment setting, where the optimal coupling $\dd\gamma_\varepsilon(s,x) = K_\varepsilon(s,x) \dd\mu_0(s)\dd\mu_1(x)$, the bias between $\mathbb{E}_\epsilon[Y(0)|T=1]$ and the true counterfactual $\mathbb{E}[Y(0)|T=1]$ is 
\begin{align*}
    \text{Bias} &= \int_{\mathcal{X}_0\times\mathcal{X}_1}[f_0(s)K_\varepsilon(s,x) - f_0(x)]\dd\mu_0(s)\dd\mu_1(x)
\end{align*}
where $Y_i(0)=f_0(X_i)+\eta_i$, $\eta_i$ is an independent noise term as in Assumption \ref{ass:causal} and where we have made the dependence of $K$ on epsilon explicit.

Based on this expression, it is easy to see that if the divergence terms $D_\phi$ vanish quicker than $\varepsilon$ as the latter goes to zero (for instance, when scaling the divergence terms with a parameter $\rho = o(\varepsilon)$), then $K_\varepsilon(s,x)$ converges to a Dirac delta distribution $\delta_x(s)$, which sets the bias to zero. The explanation for this provides further insight into matching approaches: by setting the bias to zero in the limit one only keeps perfect matches. If the distributions of characteristics between groups differ even in the limit, this entails discarding (potentially a significant amount of) mass even in the limiting distribution. This obviously increases the variance of potential estimates, potentially dramatically so.

Classical optimal transport has a balancing constraint, which enforces a match of all the mass in both distributions. This leads to matches that are not perfect and introduces potentially severe bias; on the other hand, the balancing constraint allows the method to use all data points, which in general provides a low variance. 

Unbalanced optimal transport is a middle ground between the two: by introducing a penalty term for the balancing constraint from classical optimal transport via the divergence terms, it allows for the balancing constraint to not be satisfied perfectly, which allows for discardment of bad matches. On the other hand, the balancing constraint enforces the use of most of the available data, which in general yields a lower variance. Therefore, from a causal inference perspective, the unbalanced optimal transport approach balances bias and variance: the more the balancing constraint of the classical optimal transport problem is enforced, the more bad matches are created introducing bias into the solution while reducing the variance, and vice versa.

\section{Simulations and application to the Lalonde data set}\label{sec:illu}
The competitive statistical finite-sample properties of the proposed method, especially compared to the standard $k$-nearest neighbor approach \citep{abadie2004implementing}, can already be appreciated in simple two-dimensional simulation studies. We propose two simulations, one with a region of overlapping individuals from the two groups, and the other one with the treated sample concentrated within the support of the control sample. Figure \ref{plot:case_plot} provides an illustration of the covariate distributions for the two testing cases. To compare the method to the existing $k$-nearest neighbor approach (KNN), we compare the performance of unbalanced optimal transport with inverse propensity score weighting (IPW), and KNN when $k=1$ and when $k=3$, in terms of the ATE and ATT estimation accuracy. 
\subsection{Simulation studies}
Denote the 2-dimensional covariate $\mathbf{X} = (X_1,X_2)$.
Throughout the simulation studies, we sample the outcome from the following distributions. 
\begin{align*}
    Y(0) &\sim \mathcal{N}(-1+X_1\cdot X_2,1)\\
    Y(1) &\sim \mathcal{N}(2 + 2X_1 + X_2,0.5)
\end{align*}
In addition, we sample $n_0=1000$ and $n_1=100$ individuals for the control and treated group respectively.
\begin{figure}[h]
\centering
\includegraphics[width=\textwidth]{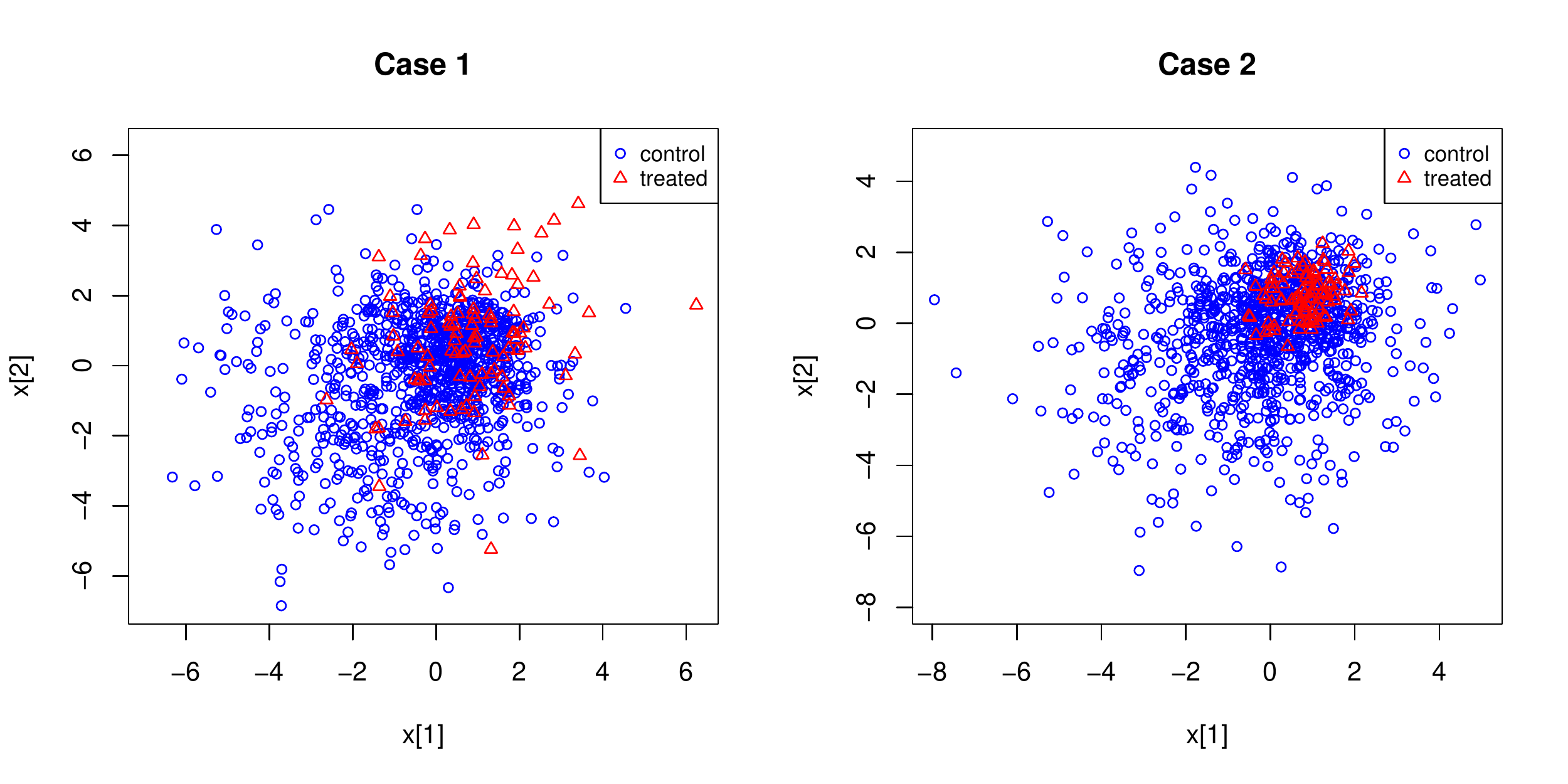}
\caption{Illustration of the covariate distribution for the treated and control groups.}
\label{plot:case_plot}
\end{figure}

\begin{table}[hb!]
\centering
\begin{tabular}{lrrrrrr}
\toprule
\multicolumn{7}{c}{Case 1}                                                                                     \\
                                               & Penalty & OT     & IPW     & KNN(k=3) & KNN(k=1) & Unadjusted \\ \hline
\multirow{4}{*}{ATE\_diff}   & 0.001   & 0.4173 & 20.5054 & 0.6873   & 0.4957   & 2.7452     \\
\multicolumn{1}{c}{}                           & 0.005   & 0.6749 & 20.8503 & 0.6917   & 0.4977   & 2.7471     \\
\multicolumn{1}{c}{}                           & 0.010   & 0.8313 & 20.3398 & 0.6675   & 0.4696   & 2.7474     \\
\multicolumn{1}{c}{}                           & 0.050   & 1.6036 & 20.3374 & 0.6719   & 0.4855   & 2.7309     \\ \hline
\multirow{4}{*}{ATT\_diff}                     & 0.001   & 0.2428 & 1.3931  & 0.2199   & 0.2023   & 0.3068     \\
                                               & 0.005   & 0.2842 & 1.4699  & 0.2450   & 0.2286   & 0.2996     \\
                                               & 0.010   & 0.3275 & 1.5040  & 0.2274   & 0.2234   & 0.2822     \\
                                               & 0.050   & 0.4606 & 1.4734  & 0.2351   & 0.2230   & 0.3172     \\ \hline
\multirow{4}{*}{ATT\_sd\_diff}                 & 0.001   & 0.7943 & -       & 0.9089   & 2.6969   & -          \\
                                               & 0.005   & 0.9530 & -       & 0.9535   & 2.7791   & -          \\
                                               & 0.010   & 1.1830 & -       & 0.9830   & 2.8030   & -          \\
                                               & 0.050   & 1.9420 & -       & 0.9188   & 2.6962   & -          \\ \hline
\multicolumn{7}{c}{Case 2}                                                                                     \\
                                               & Penalty & OT     & IPW     & KNN(k=3) & KNN(k=1) & Unadjusted \\ \hline
\multirow{4}{*}{ATE\_diff}                     & 0.001   & 1.5778 & 30.1113 & 1.6580   & 1.5329   & 2.7020     \\
                                               & 0.005   & 1.4981 & 30.2217 & 1.6696   & 1.5651   & 2.6984     \\
                                               & 0.010   & 1.7395 & 29.9517 & 1.7046   & 1.5796   & 2.7575     \\
                                               & 0.050   & 2.1768 & 29.9721 & 1.6663   & 1.5544   & 2.7252     \\ \hline
\multirow{4}{*}{ATT\_diff}                     & 0.001   & 0.1154 & 1.4290  & 0.1157   & 0.1285   & 0.1240     \\
                                               & 0.005   & 0.1504 & 1.4388  & 0.1057   & 0.1306   & 0.1390     \\
                                               & 0.010   & 0.2160 & 1.4316  & 0.1146   & 0.1301   & 0.1218     \\
                                               & 0.050   & 0.4460 & 1.4361  & 0.1085   & 0.1283   & 0.1320     \\ \hline
\multirow{4}{*}{ATT\_sd\_diff}                 & 0.001   & 0.1630 & -       & 0.1922   & 0.8652   & -          \\
                                               & 0.005   & 0.1393 & -       & 0.1730   & 0.8473   & -          \\
                                               & 0.010   & 0.1620 & -       & 0.1814   & 0.8738   & -          \\
                                               & 0.050   & 0.3149 & -       & 0.1752   & 0.8624   & -          \\ \bottomrule
\end{tabular}
\caption{Comparison of simulation results under cases 1 and 2. For each estimator, there are 4 rows: each row represents one simulation result with 100 bootstrap samples. Only the unbalanced OT method uses a different penalty parameter. The differences among 4 rows in all other methods are due to the bootstrap randomness. }
\label{tb:simulation}
\end{table}

\noindent\textbf{Case 1.}

For the first simulation, we sample the treated and control groups from two different mixture normal distributions with 2 components. The treated and control groups only have a proportion of covariates in the overlapping area.
\begin{align*}
    &\mathbf{X}|\{T=0\}\sim 0.5\cdot N\left( -\mathbf{1},2I\right) + 
    0.5\cdot N\left( 0.5\mathbf{1},I\right) ,\\
    &\mathbf{X}|\{T=1\}\sim 0.5\cdot N\left( \mathbf{1},2I\right) + 
    0.5\cdot N\left( 0.5\mathbf{1},I\right)
\end{align*}


\noindent\textbf{Case 2.}
For the second simulation, the treated group is concentrated within the region of the control group.
\begin{align*}
    &\mathbf{X}|T=0\sim 0.5\cdot N\left( -\mathbf{1},2I\right) + 
    0.5\cdot N\left( 0.5\mathbf{1},I\right) ,\\
    &\mathbf{X}|T=1\sim 0.5\cdot N\left( \mathbf{1},0.5I\right) + 
    0.5\cdot N\left( 0.5\mathbf{1},0.5I\right)
\end{align*}

To compare the performance of unbalanced OT with KNN and IPW, 100 bootstrap samples are drawn in each case, the estimated ATE and ATT are computed for each method. For unbalanced OT, the simulation study includes a varying range of the penalty parameter $\epsilon$ from 0.001 to 0.05. In addition, we also compare the estimated standard deviation for ATT by unbalanced OT and KNN methods. Since IPW is a weighting-based method, whereas KNN and OT are imputation-based methods, the standard deviation for IPW is not included. The unadjusted ATE, defined as the observed sample difference between $Y(1)$ and $Y(0)$, is computed as a baseline estimation. Table \ref{tb:simulation} gives the summary results of different methods under case 1 and 2. For each case, to evaluate the estimation accuracy, we compute the mean of the absolute difference between the estimates and the true values across all bootstrap samples, denoted as \texttt{ATE\_diff}, \texttt{ATT\_diff}, and \texttt{ATT\_sd\_diff} respectively in Table \ref{tb:simulation}.

Based on the results in Table \ref{tb:simulation}, we have the following observations: 
\begin{enumerate}
    \item The smaller penalty term leads to better estimation accuracy for the unbalanced OT.
    \item OT is more biased when imputing potential outcome for outliers. Compared to case 1, everyone in the treated group is surrounded by controls in case 2. Hence the unbalanced OT has a better performance in estimating ATT in case 2. In contrast, in case 2 the ATE estimation for OT is worse than in case 1. This is because the potential outcome for the control group is biased due to lack of surrounding treated samples.
    \item OT possesses a lower standard deviation in general compared to KNN.
\end{enumerate}
\begin{figure}[hb!]
\centering
\includegraphics[width=\textwidth]{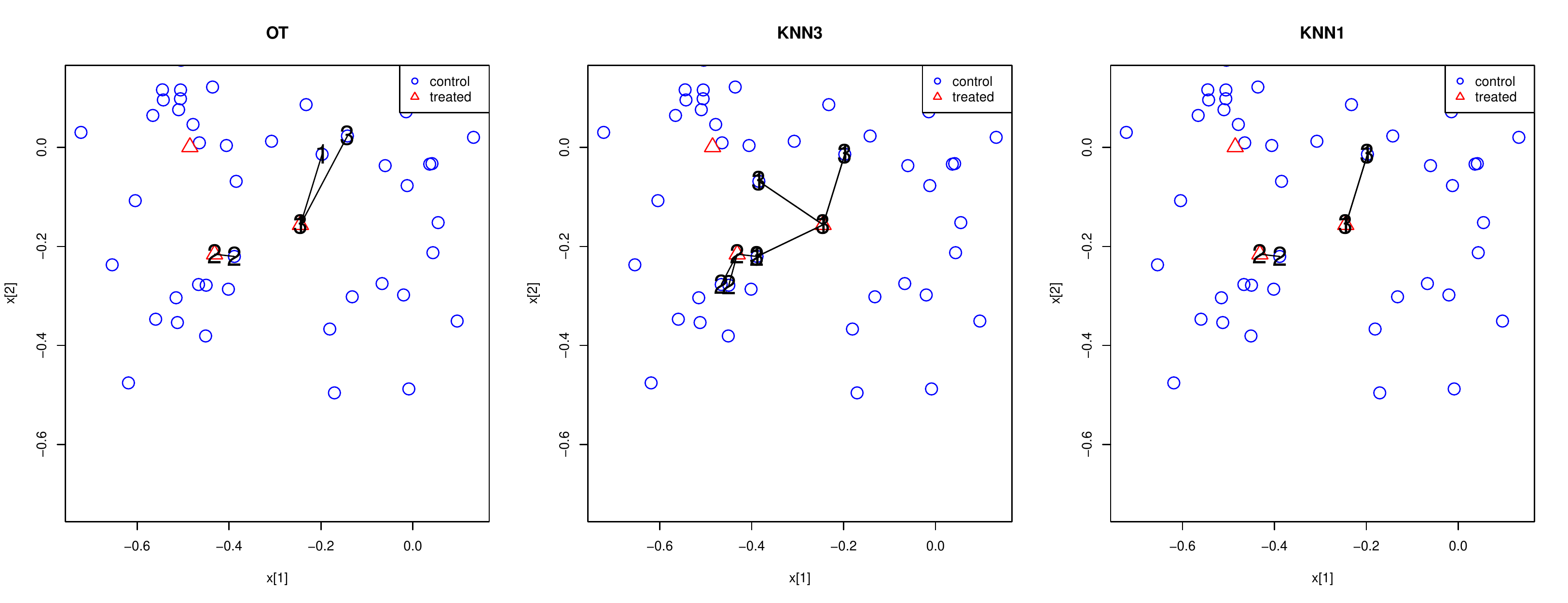}
\caption{Top 3 pairs with the largest unbalanced OT weights in the simulation case 2,  and the corresponding KNN matches for the selected treated individuals.}
\label{plot:compare_weight}
\end{figure}

To further illustrate the weight distribution in the unbalanced OT map, Figure \ref{plot:compare_weight} provides an enlarged covariates scatter plot, where the top 3 pairs that are assigned with the largest weights are marked. Different from KNN(k=1), OT assigns large weights from the treated ``1'' to both controls ``1'' and ``3''.

Overall, we recommend using unbalanced OT matching for estimating ATT when the overlap assumption is likely to be satisfied at least for one group, like in case 2, where the treated group is concentrated within the controls.

\subsection{Matching the Lalonde dataset}

We provide an application to the classical \texttt{Lalonde} dataset. The data are obtained from \url{https://users.nber.org/~rdehejia/data/.nswdata2.html} and are slightly larger and more comprehensive than the version used in \citet{dehejia1999causal} or \citet{imbens2015matching}. The dataset contains covariates and earning outcomes for men who were randomly selected into the Nationally Supported Work (NSW) program. The main outcome of interest is earnings in $1978$, \texttt{re78}. We match on 7 covariates: for continuous variables, \texttt{age} ranges from 17 to 55, \texttt{education} is the years of schooling, ranging from 3 to 16, \texttt{re75} indicates earnings in 1975, ranging from 0 to 37431.7; for the discrete variables, \texttt{black}, \texttt{hispanic} equal to 1 if an individual is Black or Hispanic, respectively,\texttt{married} is the indicator variable for martial status, \texttt{nodegree} equal to 1 if an individual is a high school dropout. We standardize \texttt{age}, \texttt{education} and \texttt{re75}. Among the 722 individuals in this dataset, 297 are in the treated group. Table \ref{tb:lalonde_sumstats} provides summary statistics of the covariates stratified by treatment groups.

\begin{table}[h]
\centering
\begin{tabular}{lrrrr}
\hline
                    & \multicolumn{2}{c}{\textbf{Treated (N=297)}}                                      & \multicolumn{2}{l}{\textbf{Control (N=425)}}                                      \\ \hline
\textbf{Continuous} & \textbf{mean}                           & \textbf{sd}                             & \textbf{mean}                           & \textbf{sd}                             \\
\texttt{age}                 & 24.63                                   & 6.69                                    & 24.45                                   & 6.59                                    \\
\texttt{education}           & 10.38                                   & 1.82                                    & 10.19                                   & 1.62                                    \\
\texttt{re75}                & 3066.10                                 & 4874.89                                 & 3026.68                                 & 5201.25                                 \\ 
\texttt{re78}                & 5976.35                              & 6923.80                                & 5090.05 & 5718.09                               \\\midrule 
\textbf{Discrete}   & \multicolumn{1}{l}{\textbf{Category 0}} & \multicolumn{1}{l}{\textbf{Category 1}} & \multicolumn{1}{l}{\textbf{Category 0}} & \multicolumn{1}{l}{\textbf{Category 1}} \\
\texttt{black}               & 59                                      & 238                                     & 85                                      & 340                                     \\
\texttt{hispanic}            & 269                                     & 28                                      & 377                                     & 48                                      \\
\texttt{married}             & 247                                     & 50                                      & 358                                     & 67                                      \\
\texttt{nodegree}            & 80                                      & 217                                     & 79                                      & 346                   \\
\bottomrule
\end{tabular}
\caption{Summary statistics of the Lalonde data.}
\label{tb:lalonde_sumstats}
\end{table}

Table \ref{tb:full_lalonde} provides the estimates for ATE and ATT under different methods. The penalty parameter is set to $10^{-3}$ for OT matching.
From Table \ref{tb:full_lalonde}, especially the ATT estimates, KNN(k=1) is almost half the value of OT and KNN(k=3) estimates. There are 2 possible explanations: (i) the noise in the outcome is large, in which case KNN(k=3) and OT results would be more trustworthy, since they are both based on weighted averages of nearby outcomes; (ii) the nearby controls of each treated individual are not very close in distance due to the lack of enough control observations, in which case KNN(k=1) estimate is more trustworthy since it only selects the nearest control as the match. We believe that the first explanation is more likely because the covariate distributions to be matched are mostly binary or discrete, which does not introduce a severe curse of dimensionality, despite the relatively small number of observations.

\begin{table}[h]
\centering
\begin{tabular}{lrrrrr}
\toprule
  & OT & IPW & KNN(k=3) & KNN(k=1) & Unadjusted\\
ATE & 760.7416 & 5686.583 & 722.4662 & 910.0533 & 886.3037\\
ATT & 828.3405 & 2381.305 & 870.6201 & 481.8134 & -\\
\bottomrule
\end{tabular}

\caption{ATE and ATT estimates for the full Lalonde data.}
\label{tb:full_lalonde}
\end{table}

The estimates of our method and the $3$-nearest neighbor method are qualitatively similar, even though we estimate a slightly higher average treatment effect of $\$760.74$ and a slightly lower average treatment effect on the treated group  of $\$828.34$. The sizable difference in outcome between the $3$-nearest and $1$-nearest neighbor estimator displays a challenge for a method that has a discrete penalty term: it is not quite clear at face-value which is the less biased estimate because there are no intermediate values, compared to the case of a continuous penalty term as in our method, which can better illustrate the sensitivity of the outome with respect to the penalty much more clearly. Our additional estimate implies that the less biased estimate among the $k$-nearest neighbor approaches is the $3$-nearest neighbor estimate.

\section{Conclusion}
We propose a matching approach for causal inference in observational studies based on unbalanced multimarginal optimal transportation. It can be understood as an alternative to the $k$-nearest neighbor method where the number of neighbors and optimal weights are chosen endogenously. The user instead specifies a penalty term $\varepsilon$. This proposed method possesses desirable properties: the optimal weights for the matches are automatically adjusted to the function that captures the cost of matching, which is usually chosen to be the squared Euclidean distance. Moreover, the method discards bad matches from either group. Together, these two properties lead to demonstrably competitive estimates in terms of bias and variance in finite samples compared to existing methods, such as the classical $k$-nearest neighbor- or propensity score methods. An application to the classical Lalonde dataset \citep{lalonde1986evaluating} shows that our method provides estimates that are qualitatively similar to those from the $3$-nearest neighbor approach.

In addition, we prove that, for fixed penalty term $\varepsilon$, the potential functions related to the optimal matching converge at the parametric rate to the true potential functions; moreover, the classical bootstrap method can be used to obtain confidence regions for the potential functions of the optimal matching. So, while a positive penalty term $\varepsilon$ induces some asymptotic bias in the estimator, this bias is still competitive with respect to those of existing methods, such as variants of the $k$-nearest neighbor matching approach. 

The method also possesses good computational properties with a strong foundation in computational optimal transport theory \citep{peyre2019computational}. This allows the applied researcher to leverage the latest advances in these areas for matching in high-dimensional spaces. A simple simulation exercise demonstrates the method's competitive finite sample properties in comparison to classical $k$-nearest neighbor approaches. Finally, the proposed method allows researchers to estimate a joint matching over several treatment arms, which could be interesting in answering general non-linear causal questions in this setting. 

From a mathematical perspective, we provide a statistical analysis of the potential functions of the optimal matching in the generalized Sinkhorn algorithm for a fixed penalty term $\varepsilon$. This complements existing statistical and mathematical results for the value function of the unbalancd optimal transport problem \citep{sejourne2019sinkhorn}, as well as results for the classical balanced entropy regularized optimal transport problem \citep{carlier2021linear, carlier2020differential, di2020optimal, eckstein2021quantitative, nutz2021entropic,  genevay2019sample, klatt2020empirical, pooladian2021entropic}, especially that in \citet{harchaoui2020asymptotics}. A next step in this direction is a finer analysis of the variance functional of the (potential functions of the) optimal matching in the unbalanced optimal transport case, also in the setting where the penalty term is allowed to vanish, similar to the contributions in the classical balanced setting. Finally, another interesting question is to analyze which matches are chosen by the method and which observations are discarded as a function of the penalty term. This could yield insights to optimize the matching approach even further and position unbalanced optimal transportation to potentially be used as a general method for nonparametric regression approaches.

\bibliography{main_bibfile}
\appendix
\section{Proofs}
\subsection{Proof of Proposition \ref{prop:consistency}}
\begin{proof}
The proof is analogous to the results in \citet{sejourne2019sinkhorn}, but for the multimarginal case. Since $c$ is smooth, and in particular Lipschitz, by Assumption \ref{ass:cost}, the same proof as in Lemma 7 of \citet{sejourne2019sinkhorn} implies that the dual problem \eqref{eq:dual} can be restricted to a compact subset of $\prod_{j=1}^J C(\mathcal{X}_j)$, the product of spaces of continuous functions. A straightforward generalization of the proof of Proposition 8 in \citet{sejourne2019sinkhorn} to the multimarginal case implies that a set $\{\varphi_j\}_{j=1}^J$ of potentials is optimal if and only if it is a fixed point of the generalized Sinkhorn matching \eqref{eq:FOC}. Then the same proof as in Theorem 1 of \citet{sejourne2019sinkhorn} implies that the generalized Sinkhorn algorithm \eqref{eq:sinkhornconv} converges to an optimal solution. This proves the first part of the statement. 

The consistency result for the optimal potentials follows from a similar argument. Again, under Assumptions \ref{ass:div} and \ref{ass:cost}, a straightforward generalization of Proposition 10 in \citet{sejourne2019sinkhorn} implies that as measures $\hat{\mu}_{N_j}$ each converge weakly to their population counterparts $\mu_j$, the optimal potentials $\hat{\varphi}_{N_j}^*$ converge uniformly to their population counterparts $\varphi_j^*$. Now, the measures $\hat{\mu}_{N_j}$ are actually random measures, and the classical result in \citet{varadarajan1958convergence} implies that they converge weakly almost surely to their population counterparts and hence weakly in probability. This implies directly that the corresponding optimal potentials converge uniformly in probability. 

The weak convergence in probability of the couplings $\hat{\gamma}^*_N$ to the optimal coupling $\gamma^*$ now follows from the extended continuous mapping theorem \citep[Theorem 1.11.1 (iii)]{wellner2013weak} and the fact that the exponential in \eqref{eq:optim_measure} is continuous with respect to the optimal potentials in combination with Slutsky's lemma \citep[Example 1.4.7]{wellner2013weak}.
\end{proof}

\subsection{Proof of Theorem \ref{thm:potential}}
For the sake of notation we drop the iteration $m$ in the definition of the estimated potentials, i.e.~we write $\hat{\varphi}_N$ instead of $\hat{\varphi}_N^m$, unless we need to specify $m$ explicitly. 

We split the proof into another lemma and the main proof. The additional lemma provides the consistency of the empirical analogues of the potential functions $\hat{\varphi}_N$ in the space $\prod_jC^\infty(\mathcal{X}_j)$. We say that $f_j\in C^\infty(\mathcal{X}_j)$ if $f_j\in C^k(\mathcal{X}_j)$ for every multi-index $k\coloneqq(k_1,\ldots,k_d)\in\mathbb{N}^d_0$, i.e.~if 
\[ \left\|f_j^{(k)}\right\|_{\infty} \coloneqq \sup_{x_j\in\mathcal{X}_j}\left\lvert f_j^{(k)}(x_j)\right\rvert<+\infty\quad\text{for all $k\in\mathbb{N}^d_0$}\] where the expression $f_j^{(k)}$ is shorthand for the $k$-th partial derivative of $f_j$:
\[f_j^{(k)}(x_j) = \frac{\partial^{|k|}}{\partial^{k_1}x_{j1}\cdots\partial^{k_d}x_{1d}}f_j(x_{j1},\ldots,x_{jd})\] and $|k| = \sum_{i=1}^d k_i$. $\mathbb{N}_0$ denotes the natural numbers including zero. We write 
\[\|\varphi\|_{C^k}\coloneqq \sum_{j=1}^J \sum_{0\leq k\leq |k|} \|\varphi_j^{(k)}\|_{\infty}\coloneqq \sum_{j=1}^J \sum_{0\leq k\leq |k|} \sup_{x_j\in\mathcal{X}_j}\left\lvert \varphi_j^{(k)}(x_j)\right\rvert\] for the $C^k$-norm of $\varphi\equiv (\varphi_1,\ldots, \varphi_J)$.

The following lemma provides the consistency of the optimal potentials $\hat{\varphi}_N$ and all their derivatives as the empirical measures $\bigotimes_j\hat{\mu}_{N_j}$ converge weakly to $\bigotimes_j\mu_j$. 
\begin{lemma}\label{lem:consistency}
Under Assumptions \ref{ass:iid} - \ref{ass:cost}
\[\left\|\hat{\varphi}_N - \varphi^*\right\|_{C^k} = o_P(1)\]
for any multi-index $k\in\mathbb{N}_0^d$.
\end{lemma}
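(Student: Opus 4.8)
The plan is to prove consistency of the optimal empirical potentials together with all their derivatives by exploiting the smoothing properties of the operators appearing in the fixed-point characterization \eqref{eq:FOC}. The starting point is Proposition \ref{prop:consistency}, which already gives $\|\hat{\varphi}_N - \varphi^*\|_\infty = o_P(1)$ (uniform convergence in probability of the potentials themselves, i.e.\ the $k=0$ case). The key observation is that, because the cost $c$ is smooth on the compact convex sets $\mathcal{X}_j$ (Assumption \ref{ass:cost}) and $\phi^*$ is smooth with strictly positive second derivative on its domain (Assumption \ref{ass:div}), the right-hand side of the fixed-point equation
\[
\varphi_j^*(x_j) = -\mathrm{aprox}_{\phi^*}^\varepsilon\!\left(-\mathrm{Smin}_j^\varepsilon\!\left(c - \textstyle\sum_{i\neq j}\varphi_i^*\right)\right)
\]
is a \emph{smoothing} map: the integral against $\bigotimes_{i\neq j}\mu_i$ inside $\mathrm{Smin}_j^\varepsilon$ produces a function of $x_j$ whose derivatives of every order are controlled by derivatives of $c$ and sup-norm bounds on the $\varphi_i$, uniformly over $x_j\in\mathcal{X}_j$. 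Since $\varphi^*$ (and likewise $\hat{\varphi}_N$, with $\mu_i$ replaced by $\hat{\mu}_{N_i}$) is a fixed point of this map, it \emph{inherits} the smoothness: bootstrapping the fixed-point identity shows $\varphi_j^*\in C^\infty(\mathcal{X}_j)$ with derivative bounds depending only on $\varepsilon$, $\|c\|_{C^k}$, $\|\varphi^*\|_\infty$, and the (fixed) structure of $\phi^*$.

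The core of the argument is then a quantitative version of this: I would write, for each multi-index $k$,
\[
\hat{\varphi}_{N,j}^{(k)} - \varphi_{j}^{*(k)}
= \partial^k\!\Big[G_j\big(\hat{\varphi}_{N,-j}; \hat{\mu}_{N,-j}\big) - G_j\big(\varphi^*_{-j}; \mu_{-j}\big)\Big],
\]
where $G_j$ denotes the composition $-\mathrm{aprox}_{\phi^*}^\varepsilon\circ(-\mathrm{Smin}_j^\varepsilon)(c-\sum_{i\neq j}(\cdot))$, and decompose the difference into (i) a term due to the change of measure $\hat{\mu}_{N,-j}\rightsquigarrow\mu_{-j}$ and (ii) a term due to the change of potentials $\hat{\varphi}_{N,-j}\rightsquigarrow\varphi^*_{-j}$. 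For (i): differentiating under the integral sign is legitimate on the compact $\mathcal{X}_j$, and the $k$-th $x_j$-derivative of $\exp((\sum_{i\neq j}\varphi_i - c)/\varepsilon)$ is, by Faà di Bruno, a fixed smooth function of $(x_j, x_{-j})$ (polynomial in derivatives of $c$ times the exponential), hence lies in a $\mu_{-j}$-Donsker — in fact Glivenko–Cantelli — class; so the empirical-measure error is $o_P(1)$ uniformly in $x_j$, and the outer smooth maps $\log$, $\mathrm{aprox}_{\phi^*}^\varepsilon$, together with their derivatives (here one uses the formula for $\tfrac{d}{dp}\mathrm{aprox}_{\phi^*}^\varepsilon$ in terms of $\phi^{*\prime},\phi^{*\prime\prime}$ already recorded before Theorem \ref{thm:potential}), preserve this. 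For (ii): the same differentiation shows $\partial^k G_j$ is Lipschitz in the $C^{|k|}$-norm of its potential arguments with a constant depending only on $\varepsilon$ and the smooth data, so this term is $O(\|\hat{\varphi}_{N,-j}-\varphi^*_{-j}\|_{C^{|k|-1}}) $ plus lower-order pieces — note the crucial gain: the $k$-th derivative of the output depends only on derivatives of order $\le |k|$ of the inputs, and actually the smoothing means one can close an induction.

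I would therefore run an induction on $|k|$. The base case $|k|=0$ is Proposition \ref{prop:consistency}. For the inductive step, assume $\|\hat{\varphi}_N - \varphi^*\|_{C^{|k|-1}} = o_P(1)$; combine (i) (which is $o_P(1)$ outright) with (ii) (which is $O_P(\|\hat{\varphi}_N-\varphi^*\|_{C^{|k|-1}}) = o_P(1)$ by the inductive hypothesis, after taking $N$ iterations $m$ of the IPFP large enough that $\hat T_{N_j}(\hat\varphi^m_N)=o_P(N^{-1/2})$ so that $\hat\varphi_N$ is an approximate fixed point), to conclude $\|\hat{\varphi}_N-\varphi^*\|_{C^{|k|}} = o_P(1)$; since this holds for each fixed $|k|$ and there are finitely many multi-indices of each order, the claim follows. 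The main obstacle, and the step requiring the most care, is the uniform (in $x_j$) control in part (i): one must verify that the family of maps $x_j \mapsto \partial^k_{x_j}\exp((\sum_{i\neq j}\varphi_i(x_i)-c(x))/\varepsilon)$, indexed by $x_j\in\mathcal{X}_j$ (and by the potentials, which themselves vary), forms a class over which $\bigotimes_{i\neq j}\hat\mu_{N_i}$ converges uniformly to $\bigotimes_{i\neq j}\mu_i$ — this is where compactness of the supports, smoothness of $c$, and the uniform sup-norm bounds on the potentials from the restriction of the dual problem to a compact set (Lemma 7 of \citet{sejourne2019sinkhorn}, as used in Proposition \ref{prop:consistency}) all enter, and where the argument must be written most carefully to avoid circularity with the subsequent Donsker-class arguments used in the proof of Theorem \ref{thm:potential}.
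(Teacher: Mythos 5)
Your proof is correct in substance, but it takes a genuinely different route from the paper's. Both arguments share the same two ingredients — the $C^0$ consistency from Proposition \ref{prop:consistency} and the fact (via the multivariate Fa\`a di Bruno formula, cf.\ equation (82) of \citet{sejourne2019sinkhorn}) that all derivatives of the potentials are uniformly bounded by a constant depending only on $\varepsilon$, $d$, $J$, the supports, and $\|c^{(k)}\|_\infty$, \emph{independently of the underlying measures}. From there the paper runs a soft compactness argument: uniform bounds on derivatives of order $|k|+1$ give equicontinuity of the order-$|k|$ derivatives, Arzel\`a--Ascoli gives a uniformly convergent subsequence, the already-established uniform convergence of $\hat{\varphi}_N$ pins down the limit as $(\varphi^*)^{(k)}$, and a subsequence-of-subsequences argument upgrades to the full sequence; induction on $|k|$ finishes. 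You instead differentiate the fixed-point identity directly and split the error into a measure-change term (controlled by a Glivenko--Cantelli/uniform LLN argument over a class of smooth kernels on a compact set) and a potential-change term (controlled by Lipschitz continuity of the smoothing operator). Your route is more quantitative — it would yield rates, which the paper's compactness argument cannot — and is in fact simpler than you make it: since the $\varphi_i$, $i\neq j$, enter $G_j$ only through $x_{-j}$, which is integrated out, the $x_j$-derivatives of $G_j(\varphi_{-j})(x_j)$ depend on the input potentials only through their \emph{sup norms}, so your term (ii) is $O(\|\hat{\varphi}_{N,-j}-\varphi^*_{-j}\|_\infty)=o_P(1)$ outright and no induction on $|k|$ is needed for it. Two small points to tighten: (a) your displayed identity presumes $\hat{\varphi}_N$ is an exact fixed point of the empirical operator; an $o_P(N^{-1/2})$ sup-norm approximate-fixed-point property does not by itself control derivatives, so you should instead use that each Sinkhorn iterate is the \emph{exact} image of the smoothing operator applied to the previous iterates (the paper is equally loose here); (b) your worry about circularity with the Donsker arguments of Theorem \ref{thm:potential} is unfounded — only a Glivenko--Cantelli property for a fixed smooth kernel class on compact supports is needed, which is elementary.
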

\begin{proof}
We start with the case $|k|=0$. By Proposition \ref{prop:consistency}
\[\left\|\hat{\varphi}_N-\varphi^*\right\|_\infty = o_P(1).\] 

Based on this, the convergence of the derivatives now follows from the Arzel\`a-Ascoli theorem. We show this by an induction over the multi-indices $k\in\mathbb{N}_0^d$. By a multivariate extension of the classical Fa\`a di Bruno formula \citep[Corollary 2.10]{constantine1996multivariate}, one can show \citep[equation (82)]{sejourne2019sinkhorn} that all derivatives of the potentials are uniformly bounded for fixed $\varepsilon$, i.e.
\[\left\|\left(\varphi^*\right)^{(k)}\right\|_\infty \leq M<+\infty,\] where $M$ is a constant that depends on $\varepsilon$, the dimension $d$, the number of elements $J$, the size of the supports $\mathcal{X}_j$, and the bound $\|c^{(k)}\|_{\infty}$ on all derivatives of the cost function.

Now consider the sequences $\left\{\hat{\varphi}_{N}\right\}^{(k)}_{N\in\mathbb{N}}$ with multi-indices with $|k|=1$. By the argument above, these sequences are uniformly bounded and equicontinuous. Uniform boundedness follows from the fact that $M$ is independent of the measures in question. Equicontinuity follows from the fact that all derivatives with multi-indices such that $|k|=2$ are uniformly bounded and the mean value inequality. 

But then by the Arzel\`a-Ascoli theorem, the sequence $\{\hat{\varphi}^{(k)}_{N}\}$ with $|k|=1$ has a subsequence $\{\hat{\varphi}^{(k)}_{N_m}\}$ that converges uniformly. Since $\{\hat{\varphi}_{N}\}$ converges uniformly to $\varphi^*$, this subsequence converges uniformly to $\left(\varphi^*\right)^{(k)}$. Now since the original sequence $\{\hat{\varphi}_{N}\}$ converges uniformly, this implies that any subsequence converges uniformly to the same element. So repeating the same argument as above for any subsequence $\{\hat{\varphi}_{N_m}\}$ shows that any such subsequence has a further subsequence $\{\hat{\varphi}_{{N_m}_l}\}$ such that $\{\hat{\varphi}^{(k)}_{{N_m}_l}\}$ converges uniformly to $\left(\varphi^*\right)^{(k)}$ for $|k|=1$. But this implies that any subsequence $\{\hat{\varphi}^{(k)}_{{N_m}_l}\}$ converges uniformly to $\left(\varphi^*\right)^{(k)}$ for the multi-index with $|k|=1$. This implies that the sequence $\{\hat{\varphi}^{(k)}_{N}\}$ converges uniformly to $\left(\varphi^*\right)^{(k)}$ for $|k|=1$.

Now proceed by induction to finish the proof. We have just shown that the sequence $\{\hat{\varphi}^{(k)}_{N}\}$ converges to $\left(\varphi^*\right)^{(k)}$ for $|k|=1$. We know by the above argument that all $\hat{\varphi}^{(k)}_{N}$ with $|k|=2$ and $|k|=3$ are uniformly bounded which also implies that the sequence $\hat{\varphi}^{(k)}_{N}$ with $|k|=2$ is equicontinuous. Apply the same reasoning as above. Doing this for all multi-indices $k\in\mathbb{N}^d_0$ finishes the proof.
\end{proof}

We can now prove Theorem \ref{thm:potential}.
\begin{proof}[Proof of Theorem \ref{thm:potential}]
We prove the statement using standard results from empirical process theory \citep{wellner2013weak, kosorok2008introduction}, in particular results for $Z$-estimators. We consider the integrated version of the Schr\"odinger system \eqref{eq:FOC} a $Z$-estimator in the sense of Theorem 3.3.1 in \citet{wellner2013weak} and Theorem 13.4 in \citet{kosorok2008introduction}. In particular, the first-order equations \eqref{eq:FOC} are derived by requiring that the first variation $\delta F(\varphi^*,v)$ \citep[Definition 40.2]{zeidler2013nonlinear} of the objective function \eqref{eq:dual} in any direction $v\in \prod_{j=1}^JL^1(\mu_j)$ vanishes at the optimal $\varphi^*$, i.e.
\begin{equation}\label{eq:firstvariation}
\begin{aligned}
0=\delta F(\varphi^*,v)\coloneqq&\sum_{j=1}^J \int_{\mathcal{X}} v_j\left[{\phi^*}'\left(-\varphi_j\right)-\exp\left(\frac{\sum_{j=1}^J\varphi_j-c}{\varepsilon}\right)\right]\dd\bigotimes_{j=1}^J\mu_j\\
& = \sum_{j=1}^J\int_{\mathcal{X}_j} v_j\int_{\mathcal{X}_{-j}}\left[{\phi^*}'\left(-\varphi_j\right)-\exp\left(\frac{\sum_{j=1}^J\varphi_j-c}{\varepsilon}\right)\right]\dd\bigotimes_{i\neq j}\mu_i\dd\mu_j,\\
\end{aligned}
\end{equation}
where ${\phi^*}'$ is the first derivative of $\phi^*$. We consider  $v\in \prod_j L^1(\mu_j)$ because $\sum_jv_j$ and the term in brackets in \eqref{eq:firstvariation} form a natural duality bracket in $\left(\prod_jL^1(\mu_j),\prod_jL^\infty(\mu_j)\right)$. 

Now we show that \eqref{eq:firstvariation} holds if and only if the first order conditions \eqref{eq:FOC} are satisfied. Since all measures $\mu_j$ are probability measures by assumption in our setting, Euler-Lagrange equations corresponding to \eqref{eq:firstvariation} are 
\[{\phi^*}'\left(-\varphi_j\right)=\int_{\mathcal{X}_{-j}}\exp\left(\frac{\sum_{j=1}^J\varphi_j-c}{\varepsilon}\right)\dd\bigotimes_{i\neq j}\mu_i\qquad \text{$\mu_j$-almost everywhere for all $j$.}\] 
By equation (3.6) in \citet{combettes2013moreau} this is equivalent to 
\[\varphi_j = \text{aprox}_{\phi^*}^\varepsilon\left(-\text{Smin}_j^\varepsilon\left(c-\sum_{i\neq j}\varphi_i\right)\right)\qquad\text{$\mu_j$-almost everywhere for all $j$},\] where
\[\text{Smin}_j^\varepsilon\left(c-\sum_{i\neq j}\varphi_i\right) \coloneqq -\varepsilon\log\int_{\mathcal{X}_{-j}}\exp\left(\frac{\sum_{i\neq j}\varphi_i-c}{\varepsilon}\right)\dd\bigotimes_{i\neq j}\mu_i\] is the ``soft-min'' operator.

We can now focus on proving the theorem. It will be based on Theorem 3.3.1 in \citet{wellner2013weak} and Theorem 13.4 in \citet{kosorok2008introduction}. We need to show the following:
\begin{itemize}
    \item[(i)] $\left\|\hat{\varphi}_N-\varphi^*\right\|_{\infty} = o_P(1)$
    \item[(ii)] There exists some $\delta>0$ such that the classes
    \begin{multline*}\tilde{\mathcal{F}}_{\delta,j}\coloneqq\left\{v_j\left[\varphi_j - \text{aprox}_{\phi^*}^\varepsilon\left(-\text{Smin}_j^\varepsilon\left(c-\sum_{i\neq j}\varphi_i\right)\right)\right]: \|\varphi - \varphi^*\|_{C^k}<\delta\thickspace \forall k\in\mathbb{N}^d_0,\right.\\ \left. \thickspace v_j\in B_1\left(C^\infty_c(\mathcal{X}_j)\right), \varphi\in \prod_jC^\infty(\mathcal{X}_j)\right\}
    \end{multline*} are $\mu_j$-Donsker for all $j$ (see chapters 2.1 and 2.5 in \citet{wellner2013weak} for a definition of Donsker classes). Here, $C^\infty_c(\mathcal{X}_j)$ is the space of all smooth functions with compact support on $\mathcal{X}_j$ and $B_1\left( C^\infty_c(\mathcal{X}_j)\right)$ is the unit ball in $C_c^\infty(\mathcal{X}_j)$, i.e.~the set of all functions $v_j$ with $\|v_j\|_{C^k(\mathcal{X}_j)}\leq 1$ for any multi-index $k\in\mathbb{N}^d_0$.
    \item[(iii)] As $\|\varphi-\varphi^*\|_{\infty}\to0$
    \[\sup_{v_j\in B(C_c^\infty(\mathcal{X}_j))}\int_{\mathcal{X}_j}v_j^2\left[T_j(\varphi)(x_j) - T_j(\varphi^*)(x_j)\right]^2\dd\bigotimes_{ j}\mu_j\to0,\quad\text{$j=1,\ldots, J$}\]
    \item[(iv)] $T(\varphi)\coloneqq (T_1(\varphi),\ldots, T_J(\varphi))$ is Fr\'echet-differentiable with continuously invertible derivative 
    \[T'_j(\varphi)(u) = u_j+\varepsilon Y\cdot\frac{\int_{-\mathcal{X}}\sum_{i\neq j}u_i\exp\left(\frac{\sum_{i\neq j}\varphi_i-c}{\varepsilon}\right)\dd\bigotimes_{i\neq j}\mu_i}{\int_{-\mathcal{X}}\exp\left(\frac{\sum_{i\neq j}\varphi_i-c}{\varepsilon}\right)\dd\bigotimes_{i\neq j}\mu_i}\] for all $u\coloneqq (u_1,\ldots,u_J)\in\prod_{j=1}^J C(\mathcal{X}_j)$, where
    \[Y \equiv \frac{{\phi^*}'\left[\text{aprox}_{\phi^*}^\varepsilon\left(-\text{Smin}_j^\varepsilon\left(c-\sum_{i\neq j}\varphi_i\right)\right)\right]}{{\phi^*}'\left[\text{aprox}_{\phi^*}^\varepsilon\left(-\text{Smin}_j^\varepsilon\left(c-\sum_{i\neq j}\varphi_i\right)\right)\right]+ \varepsilon {\phi^*}''\left[\text{aprox}_{\phi^*}^\varepsilon\left(-\text{Smin}_j^\varepsilon\left(c-\sum_{i\neq j}\varphi_i\right)\right)\right]},\] with ${\phi^*}''$ being the second derivative of $\phi^*$.

    \item[(v)] $\left\|\hat{T}_N(\hat{\varphi}_N)\right\|_{\infty} = o_P(N^{-1/2})$ for $\hat{\varphi}_N$ and 
    \[P\left( \left\|\left(\sqrt{N_1}\hat{T}_{N_1}^b(\hat{\varphi}^b_{N}), \ldots, \sqrt{N}_J \hat{T}_{N_J}^b(\hat{\varphi}^b_{N})\right)\right\|_{L^\infty}>\eta\rvert \mathbb{X}_N\right)=o_P(1),\]
    where $\mathbb{X}_N$ denotes the sample drawn and where $\hat{T}^b_N(\hat{\varphi}^b_N)$ denotes a bootstrapped version of the operator \citep[Theorem 13.4]{kosorok2008introduction}, i.e.~where we estimate the measures $\mu_j$ via some (multiplier) bootstrap procedure, see \citet[section 3.6]{wellner2013weak} for details on the bootstrap. 
\end{itemize}
We prove each requirement one by one before putting everything together.\\

\noindent\emph{Requirement (i):}
This follows from Lemma \ref{lem:consistency}. \\

\noindent\emph{Requirement (ii):} 
This requirement is needed to provide the asymptotic equicontinuity of the empirical process map corresponding to the estimation equation 
\[\delta F(\varphi^*,v) =0\quad\text{for all $v\in  \prod_jB_1\left(C^\infty_c(\mathcal{X}_j)\right)$},\] where $\prod_jB_1\left(C^\infty_c(\mathcal{X}_j)\right)$ denotes the unit ball of all smooth functions $v\equiv (v_1,\ldots, v_J)$ with compact support on $\prod_j\mathcal{X}_j$. The space of smooth functions is a Fr\'echet space, so a general way to consider the unit ball in this Fr\'echet space is to define a weighted series of the $C^k$-norms that makes the ``$C^\infty$-norm'' summable, see \citet{treves2006topological} for details. The weak convergence of the empirical process of the first-order condition required for obtaining the asymptotic distribution of the empirical process of the potentials can then be shown by the weak convergence of the empirical process corresponding to the first variation \eqref{eq:firstvariation} for all $v$ in a sufficient index set $\mathcal{S}\subset \prod_{j=1}^JL^1(\mu_j)$ by Lemma 3.3.5 in \citet{wellner2013weak}. A sufficient condition that also provides the fact that this asymptotic distribution will be a tight mean-zero Gaussian process is exactly requirement (ii) above for $\mathcal{S}= \prod_j B_1\left(C^\infty_c(\mathcal{X}_j)\right)$ and follows from Theorem 13.4 (C) in \citet{kosorok2008introduction}. 

So to show requirement (ii) we hence have to show two things. First, we need to show that we can set $\mathcal{S}\coloneqq \prod_jB_1\left(C^\infty_c(\mathcal{X}_j)\right)$. Second, we need to show that $\mathcal{F}_{\delta,j}$ are indeed $\mu_j$-Donsker. For the first statement note that
$T$ is Fr\'echet differentiable with derivative $T_j'(\varphi)$ for all $\varphi$, as the G\^ateaux- derivative obtained via the first variation \eqref{eq:firstvariation} is continuous in a neighborhood of $\varphi$ and hence coincides with the Fr\'echet-derivative \citep[p.~192]{zeidler2013nonlinear}. Now Fr\'echet differentiability in particular implies that the convergence of the linear approximation to the derivative at $\varphi^*$ from any direction $v\in\prod_jL^1(\mu_j)$ is of the same rate. This further implies that we can infer the derivative in any direction $v\in \prod_j L^1(\mu_j)$ by only considering directions in a dense subset of $\prod_j L^1(\mu_j)$. In particular, the equality \eqref{eq:firstvariation} only has to hold for all $v$ in a dense subset of the unit ball of $\prod_{j=1}^JL^1(\mu_j)$. But $\prod_{j=1}^J C_c^\infty(\mathcal{X}_j)$ is dense in $\prod_{j=1}^J L^1(\mu_j)$ \citep[Proposition 8.17]{folland1999real}, so that we can focus on the unit ball in $\prod_j C_c^\infty(\mathcal{X}_j)$.  

For the second criterion, we need to show that there exists some $\delta>0$ such that $\mathcal{F}_{\delta,j}$ is $\mu_j$-Donsker for all $j$. For this, we use the concept of bracketing entropy and bracketing numbers \citep[chapter 2.7]{wellner2013weak}. Since $\prod_jC(\mathcal{X}_j)$ is a Banach lattice under the standard ordering $f\preceq g \Leftrightarrow f_j(x_j)\leq g_j(x_j)$ for all $j$, we can define, for any two functions $f_l, f_u\in \prod_j C(\mathcal{X}_j)$ the bracket $[f_l,f_u]$ as the set of all functions $f$ such that $f_l\preceq f\preceq f_u$. A $\eta$-bracket is a bracket $[f_l,f_u]$ with $\|f_u-f_l\|_{L^\infty}<\eta$. For some set $\mathcal{F}$ of functions the bracketing number $N_{[\medspace]}\left(\eta,\mathcal{F},\|\cdot\|\right)$ is the minimum number of $\eta$-brackets needed to cover $\mathcal{F}$ in the metric induced by $\|\cdot\|$ \citep[p.~83]{wellner2013weak}. 

To show that $\mathcal{F}_{\delta,j}$ is $\mu_j$-Donsker for some $\delta>0$ we want to show that the corresponding bracketing entropy integral is finite, i.e.~that
\begin{equation}\label{eq:entropyint}
\int_0^\infty \sqrt{\ln N_{[\thinspace]}\left(\eta,\mathcal{F}_{\delta,j}, L^2(\mu_j)\right)}d\eta<+\infty.
\end{equation}
The idea is to treat $v_j$ and 
\[g_\varphi(x_1,\ldots,x_J) \equiv g_\varphi(x) \coloneqq \varphi_j - \text{aprox}_{\phi^*}^\varepsilon\left(-\text{Smin}_j^\varepsilon\left(c-\sum_{i\neq j}\varphi_i\right)\right)\] separately.
The reason is that Corollary 2.7.2 in \citet{wellner2013weak} in combination with the fact that $\mathcal{X}_j\subset\mathbb{R}^d$ is compact implies that $B_1(C_c^\infty(\mathcal{X}_j))$ is $\mu_j$-Donsker for any dimension $d$. Moreover, it is a uniformly bounded set. If we can show that the set of all functions $g_\varphi$ for some $\delta>0$ is also a uniformly bounded Donsker class for any dimension $d$, then their product, i.e.~$\mathcal{F}_{\delta,j}$, will be a Donsker class for any dimension by Example 2.10.8 in \citet{wellner2013weak}. 

Using the multivariate Fa\`a di Bruno formula \citep{constantine1996multivariate}, the optimal $\varphi^*$ obtained via the Sinkhorn algorithm is bounded and smooth \citep[equation (82)]{sejourne2019sinkhorn}. Since this holds for any set of measures $\{\mu_j\}_{1\leq j\leq J}$ it also holds for $\hat{\varphi}_n$. Moreover, by Lemma \ref{lem:consistency} we know that $\hat{\varphi}_{N_j}$ converges to $\varphi^*_j$ in $C^\infty(\mathcal{X}_j)$. Hence for some small $\delta>0$ we can find a uniform bound for the $\hat{\varphi}_{N_j}$ in $C^\infty(\mathcal{X}_j)$ by the reverse triangle inequality:
\[\left\lvert \left\|\hat{\varphi}_N\right\|_{C^k} - \left\|\varphi^*\right\|_{C^k}\right\rvert\leq \left\|\hat{\varphi}_N-\varphi^*\right\|_{C^k}<\delta,\] for some $\delta>0$ and all multi-indices $k\in\mathbb{N}_0^d$. This implies that the classes
\[\mathcal{G}_{\delta,j}\coloneqq\left\{\left[\varphi_j - \text{aprox}_{\phi^*}^\varepsilon\left(-\text{Smin}_j^\varepsilon\left(c-\sum_{i\neq j}\varphi_i\right)\right)\right]: \|\varphi - \varphi^*\|_{C^k}<\delta\thickspace\forall k\in\mathbb{N}_0^d, \varphi\in \prod_jC^\infty(\mathcal{X}_j)\right\}\] are a bounded subset of $\prod_j C^\infty(\mathcal{X}_j)$ for compact and convex $\mathcal{X}_j$. Therefore by Corollary 2.7.2 $\mathcal{G}$ is a uniformly bounded Donsker class for any dimension $d$. This implies by Example 2.10.8 in \citet{wellner2013weak} that $\tilde{\mathcal{F}}_{\delta,j}$ is a $\bigotimes_j\mu_j$-Donsker class in any dimension $d$.\\

\noindent\emph{Requirement (iii):} 
This follows immediately by the fact that the operators $T_j$ are smooth and H\"older's inequality.\\

\noindent\emph{Requirement (iv):}
We want to show that the linear integral operator $T'(\varphi)$ is invertible on $\prod_j C(\mathcal{X}_j)$ with bounded inverse. To see this, first write it as 
\[T'(\varphi)(u) = \text{Id} - L(u),\] where $\text{Id}$ is the identity operator and $L(u): \prod_j C(\mathcal{X}_j)\to \prod_j C(\mathcal{X}_j)$ with 
\[L_j(u)\coloneqq -\varepsilon Y\cdot\frac{\int_{-\mathcal{X}}\sum_{i\neq j}u_i\exp\left(\frac{\sum_{i\neq j}\varphi_i-c}{\varepsilon}\right)\dd\bigotimes_{i\neq j}\mu_i}{\int_{-\mathcal{X}}\exp\left(\frac{\sum_{i\neq j}\varphi_i-c}{\varepsilon}\right)\dd\bigotimes_{i\neq j}\mu_i}\] is a compact operator on $\prod_j C(\mathcal{X}_j)$, because it has a smooth kernel defined on compact subsets which makes the kernel uniformly bounded; therefore, Exercise 3.13 in \citet{abramovich2002invitation} implies that $L_j(u)$ is a compact operator on $C(\mathcal{X}_j)$, which implies that $L(u)$ is a compact operator on the (finite) product space. We now show that $T'$ is invertible on $\prod_j C(\mathcal{X}_j)$. To do this, we bound the operator norm of $L$ on this product space, which we define by \citep[Exercise 3.13]{abramovich2002invitation}
\[\left\|L(u)\right\|_{op} = \max_j\|L_j(u)\|_{op} = \max_j\sup_{\|u\|_{\infty}} \|L_j(u)\|_{\infty}.\] Note that this norm is equivalent to the norm
$\left\|L(u)\right\|_{op} = \sum_j\|L_j(u)\|$, since we have a finite product, but we can show invertibility more easily in the former.
But since $\exp(\cdot)$ is non-negative, we have by H\"older's inequality \citep[Theorem 6.8]{folland1999real}
\begin{align*}
   \left\lvert L_j(u)\right\rvert =&  \left\lvert\varepsilon Y\cdot\frac{\int_{-\mathcal{X}}\sum_{i\neq j}u_i\exp\left(\frac{\sum_{i\neq j}\varphi_i-c}{\varepsilon}\right)\dd\bigotimes_{i\neq j}\mu_i}{\int_{-\mathcal{X}}\exp\left(\frac{\sum_{i\neq j}\varphi_i-c}{\varepsilon}\right)\dd\bigotimes_{i\neq j}\mu_i}\right\rvert\\
   \leq & \left\lvert\varepsilon Y\cdot\frac{\left\|\sum_{i\neq j}u_i\right\|_{L^\infty(\bigotimes_{i\neq j}\mu_i)} \left\|\exp\left(\frac{\sum_{i\neq j}\varphi_i-c}{\varepsilon}\right)\right\|_{L^1(\bigotimes_{i\neq j}\mu_i)}}{\int_{-\mathcal{X}}\exp\left(\frac{\sum_{i\neq j}\varphi_i-c}{\varepsilon}\right)\dd\bigotimes_{i\neq j}\mu_i}\right\rvert\\
   = &\left\lvert\varepsilon Y\right\rvert \frac{\left\|\sum_{i\neq j}u_i\right\|_{L^\infty(\bigotimes_{i\neq j}\mu_i)} \left\|\exp\left(\frac{\sum_{i\neq j}\varphi_i-c}{\varepsilon}\right)\right\|_{L^1(\bigotimes_{i\neq j}\mu_i)}}{\int_{-\mathcal{X}}\left\lvert\exp\left(\frac{\sum_{i\neq j}\varphi_i-c}{\varepsilon}\right)\right\rvert\dd\bigotimes_{i\neq j}\mu_i}\\
   =& \varepsilon Y \frac{\left\|\sum_{i\neq j}u_i\right\|_{L^\infty(\bigotimes_{i\neq j}\mu_i)} \left\|\exp\left(\frac{\sum_{i\neq j}\varphi_i-c}{\varepsilon}\right)\right\|_{L^1(\bigotimes_{i\neq j}\mu_i)}}{\left\|\exp\left(\frac{\sum_{i\neq j}\varphi_i-c}{\varepsilon}\right)\right\|_{L^1(\bigotimes_{i\neq j}\mu_i)}}\\
   =&\varepsilon Y \left\|\sum_{i\neq j}u_i\right\|_{L^\infty(\bigotimes_{i\neq j}\mu_j)}\\
   \leq &\varepsilon Y \left\|\sum_{i\neq j}u_i\right\|_{\infty}\\
   \leq &\varepsilon Y \sum_{i\neq j}\left\|u_i\right\|_{\infty}\\
   \leq &\varepsilon Y \sum_{j}\left\|u_j\right\|_{\infty},
\end{align*}
where $L^\infty(\mu_j)$ is the standard essential supremum norm \citep[chapter 6]{folland1999real} and because $Y$ is non-negative because the anisotropic proximity operator is non-decreasing \citep[Proposition 5]{sejourne2019sinkhorn}. The third-to-last line follows from the definition of the essential supremum. But for the operator norm we pick $u$ such that $\|u\|_\infty\equiv \sum_j\left\|u_i\right\|_{\infty}=1$, which implies that
\[\left\|L(u)\right\|_{op}\leq \varepsilon Y.\]
But by Assumption \ref{ass:div} the function $\phi$ is strictly convex and smooth, which implies that $\phi^*$ is strictly convex and smooth \citep[Theorem 26.6]{rockafellar1997convex}, which implies that ${\phi^*}''>0$. This implies that
\[Y \equiv \frac{{\phi^*}'\left[\text{aprox}_{\phi^*}^\varepsilon\left(-\text{Smin}_j^\varepsilon\left(c-\sum_{i\neq j}\varphi_i\right)\right)\right]}{{\phi^*}'\left[\text{aprox}_{\phi^*}^\varepsilon\left(-\text{Smin}_j^\varepsilon\left(c-\sum_{i\neq j}\varphi_i\right)\right)\right]+ \varepsilon {\phi^*}''\left[\text{aprox}_{\phi^*}^\varepsilon\left(-\text{Smin}_j^\varepsilon\left(c-\sum_{i\neq j}\varphi_i\right)\right)\right]}<1\]
for all $\varepsilon>0$, which in turn implies
$\|L(u)\|_{op}<1$ for $0<\varepsilon<1$. A standard Neumann series argument then implies that $\text{Id}-L$ is invertible for $0<\varepsilon<1$. The fact that $I-L$ maps between Banach spaces and is bounded and surjective by the Fredholm alternative \citep[Theorem 6.6]{brezis2011functional} in conjunction with the fact that its kernel is trivial because we have a unique solution to the dual problem, the open mapping theorem \citep[Theorem 5.9]{folland1999real} implies that $(I-L)^{-1}$ is bounded. \\

\noindent\emph{Requirement (v):}
This follows from the convergence of the generalized Sinkhorn iterations from Lemma \ref{lem:consistency} and the compactness of the operators for given measures $\mu_j$, which also holds when we replace $\mu_j$ with $\hat{\mu}_j$. \\

We can now put everything together by using Theorem 3.3.1 in \citet{wellner2013weak} and Theorem 13.4 in \citet{kosorok2008introduction}. We show that the five requirements above imply the requirements (A) to (F) in Theorem 13.4 in \citet{kosorok2008introduction}. (A) is implied by the fact that the optimum $\varphi^*$ is well-separated, which in turn follows from the fat that $T(\varphi)$ is strictly convex in $\varphi$ by Assumption \ref{ass:div}. (B) is implied by (i). (C) is implied by (ii). (D) is implied by (iii). (E) is implied by (v). Finally, (F) is implied by (iv). Hence Theorem 13.4 in \citet{kosorok2008introduction} implies that
\[\left(\sqrt{N_1}\left(\hat{\varphi}_{N_1} - \varphi_1^*\right),\ldots, \sqrt{N_J}\left(\hat{\varphi}_{N_J}-\varphi_J^*\right)\right) \rightsquigarrow - \left(T'(\varphi^*)\right)^{-1}(Z),\] 
where $Z$ is the tight mean-zero Gaussian limit process of 
\[\left(\sqrt{N_1} \left(\hat{T}_{N_1}(\varphi^*) - T_1(\varphi^*)\right),\ldots, \sqrt{N_J} \left(\hat{T}_{N_J}(\varphi^*) - T_J(\varphi^*)\right)\right)\]
in the space $\ell^\infty\left(\prod_j C(\mathcal{X}_j)\right)$ of bounded functions on $\prod_jC(\mathcal{X}_j)$. Moreover, Theorem 13.4 in \citet{kosorok2008introduction} directly implies that a standard bootstrap procedure is valid.
\end{proof}

\subsection{Proof of Corollary \ref{corr:density}}

\begin{proof}[Proof of Corollary \ref{corr:density}]
The proof relies on the classical delta method \citep[chapter 3.9]{wellner2013weak}. 

By the result of Theorem \ref{thm:potential}, 
\[\left(\sqrt{N_1}\left(\hat{\varphi}^m_{N_1} - \varphi_1^*\right),\ldots, \sqrt{N_J}\left(\hat{\varphi}^m_{N_J}-\varphi_J^*\right)\right) \rightsquigarrow - \left(T'(\varphi^*)\right)^{-1}(Z),\] 

We denote 
\[Z_\varphi\equiv(Z_{\varphi,1},\ldots,Z_{\varphi,J})\coloneqq -\left(T'(\varphi^*)\right)^{-1}(Z)\] as the resulting process. First we show that $Z_\varphi$ is a mean-zero tight Gaussian process.

By Theorem \ref{thm:potential} $T'(\varphi^*)(u)$ is an isomorphism from $\prod_jC(\mathcal{X}_j)$ to $\prod_jC(\mathcal{X}_j)$. By definition, $T'_j(\varphi^*)(h)$ is a continuous linear mapping w.r.t $h$ and $T'_j(\varphi^*)(0)=0$. By the linearity of $(T'(\varphi^*))^{-1}(h)$ and Fubini's theorem, $\mathbb{E} Z_\varphi = \mathbb{E} [ -\left(T'(\varphi^*)\right)^{-1}(Z)] = -\left(T'(\varphi^*)\right)^{-1}(0)=0$. By Proposition 7.5 in \cite{kosorok2008introduction}, $Z_\varphi$ is a mean-zero Gaussian process. In addition, because $\left(T'(\varphi^*)\right)^{-1}(h)$ is a bijection, $Z$ being tight implies that $Z_\varphi$ is a tight process. Therefore, $Z_\varphi$ is a mean-zero tight Gaussian process.

Define $\Phi:\prod_{j=1}^J C(\mathcal{X}_j) \to \prod_{j=1}^J C(\mathcal{X}_j)$ as
\[\Phi(\varphi_1,...,\varphi_J):=\exp\left( \frac{\sum_{j=1}^J\varphi_j(x_j) - c(x)}{\varepsilon} \right).\]
Then for any function $h\coloneqq (h_1,\ldots, h_J)\in \prod_{j=1}^J C(\mathcal{X}_j)$, and any sequences $\prod_{j=1}^J C(\mathcal{X}_j)\ni h_n\to h$ and $t_n\to 0$,
\begin{align*}
    \frac{\Phi(\varphi+t_n h_n) - \Phi(\varphi)}{t_n}&=\frac{\exp\left(\frac{1}{\varepsilon}\left( \sum_j\varphi_j+t_n h_{j,n} \right)\right) - \exp\left(\frac{1}{\varepsilon}\sum_j \varphi_j\right)}{t_n}
    \exp\left( -\frac{1}{\varepsilon}c \right)\\
    &=\frac{1}{t_n} \left( \frac{\sum_{j=1}^Jt_n h_{j,n}}{\varepsilon} \exp\left( \frac{1}{\varepsilon}\sum_j\varphi_j \right) + O(t_n^2)\right) \exp\left( -\frac{1}{\varepsilon}c \right)\\
    &\to \frac{\sum_{j=1}^J h_j}{\varepsilon}\exp\left( \frac{\sum_j\varphi_j-c}{\varepsilon} \right)
\end{align*}
The Hadamard derivative therefore is 
\[\delta\Phi_\varphi(h) = \frac{\sum_{j=1}^Jh_j}{\varepsilon}\exp\left( \frac{\sum_j\varphi_j-c}{\varepsilon} \right).\]
Therefore, by a first-order Taylor approximation,
\begin{align*}
    &\sqrt{N} \left(\exp\left(\frac{\sum_{j=1}^J\hat{\varphi}^m_{N_j} - c}{\varepsilon}\right) - \exp\left(\frac{\sum_{j=1}^J\varphi^*_j - c}{\varepsilon}\right)\right) \\
    =& \sqrt{N}\thickspace\delta\Phi_\varphi \left(\hat{\varphi}_{N}^m - \varphi^*\right) + o_P\left(\|\hat{\varphi}_N^m - \varphi^*\|_{C^\infty}\right)\\
    =&\delta\Phi_{\varphi^*} \left(\rho_1^{-\frac{1}{2}}\sqrt{N_1}(\hat{\varphi}_{N_1}^m - \varphi^*_1),\ldots,\rho_J^{-\frac{1}{2}}\sqrt{N_J}(\hat{\varphi}_{N_J}^m - \varphi^*_J)\right) + o_P\left(\|\hat{\varphi}_N^m - \varphi^*\|_{C^\infty}\right)\\
    \rightsquigarrow &\delta\Phi_{\varphi^*} \left(\rho^{-\frac{1}{2}}\right)^\top Z_\varphi\\
    =&\exp\left( \frac{\sum_j\varphi^*_j-c}{\varepsilon} \right) \frac{\sum_{j=1}^J\rho_j^{-\frac{1}{2}} Z_{\varphi,j}}{\varepsilon},
\end{align*}
where the weak convergence in line 4 follows from the consistency of the potentials in Lemma \ref{lem:consistency}. The limit expression  is also a mean-zero tight Gaussian process since it is a linear map of $Z_\varphi$.
\end{proof}

\subsection{Proof of Corollary \ref{corr:matching}}
\begin{proof}
We want to derive the asymptotic probability measure of the empirical process
\begin{align*}
    &\sqrt{N}\left( \int f(x) \dd\hat\gamma_N(x) - \int f(x) \dd\gamma(x)  \right)\\
    =&\sqrt{N}\left(\int f(x) \exp\left(\frac{\sum_{j=1}^J\hat\varphi_{N_j}-c}{\varepsilon}\right)\dd\bigotimes_j\hat\mu_{N_j} - \int f(x) \exp\left(\frac{\sum_{j=1}^J\varphi_j^*-c}{\varepsilon}\right)\dd\bigotimes_j\mu_j  \right),
\end{align*}
which is an empirical process indexed by estimated functions \citep[Section 19.4]{wellner2007empirical, van2000asymptotic}.
For notational convenience we use the empirical process notation 
\[\mathbb{P}_N f=\int f(x) \dd\bigotimes_{j=1}^J\hat\mu_{N_j}(x),\qquad\mathbb{P} f=\int f(x) \dd\bigotimes_{j=1}^J\mu_{_j}(x),\qquad\mathbb{G}_Nf = \sqrt{N}(\mathbb{P}_Nf-\mathbb{P}f)\] and write
\[\sqrt{N}\left(\mathbb{P}_N f \hat{K}_N - PfK\right) 
= \mathbb{G}_N\left(f \hat{K}_N - fK\right) + \mathbb{G}_NfK + \sqrt{N}P\left(f \hat{K}_N - fK\right).\]

We proceed term by term. We first show that the first term is asymptotically negligible, i.e.~$\mathbb{G}_N \left(\hat K_Nf -Kf\right) = o_P(1)$. For this we have to verify the conditions in Lemma 19.24 in \cite{van2000asymptotic}.
Similar to the proof of Theorem \ref{thm:potential}, we define the following class of functions
\begin{align*}
    \tilde{\mathcal{F}}_{Kf}\coloneqq\left\{\left[\exp\left(\frac{\sum_{j=1}^J\varphi_j - c}{\varepsilon}\right) f \right]: \thickspace \right. \left.\thickspace \varphi\in \prod_jC^\infty(\mathcal{X}_j), f\in \mathcal{F}\right\},
\end{align*}
where $\mathcal{F}$ is some uniformly bounded Donsker class on $\prod_j\mathcal{X}_j$ by assumption. Using the same argument as in Theorem \ref{thm:potential}, we split this class into 
\[\tilde{\mathcal{K}}_{K}\coloneqq\left\{\left[\exp\left(\frac{\sum_{j=1}^J\varphi_j - c}{\varepsilon}\right) \right]:\thickspace \right. \left.\thickspace \varphi\in \prod_jC^\infty(\mathcal{X}_j)\right\}\qquad\text{and}\qquad \mathcal{F}.\] The first term has a bounded bracketing entropy integral since the exponential function is smooth. Moreover, due to the compactness of $\mathcal{X}_j$ and smoothness of functions in $\tilde{\mathcal{K}}_{\delta,K}$ it is also uniformly bounded and hence a uniformly bounded $\mathbb{P}$-Donsker class. Since the product of two uniformly bounded Donsker classes is a Donsker class by Example 2.10.8 in \cite{wellner2013weak}, this implies that $\tilde{\mathcal{F}}_{\delta,Kf}$ is also $\mathbb{P}$-Donsker. We also need to show that 
\begin{equation}\label{eq:quadconv}P\left(\hat{K}_Nf - Kf\right)^2 = o_P(1).\end{equation}
This follows from the following bound.
\begin{align*}
&\left\lvert \exp\left(\frac{\sum_j\varphi_j-c}{\varepsilon}\right) -\exp\left(\frac{\sum_j\varphi^*_j-c}{\varepsilon} \right)\right\rvert\\
=& \left\lvert \exp\left(\frac{\sum_j\varphi_j^*-c}{\varepsilon}\right)\left[ \exp\left(\frac{\sum_j\varphi_j-\varphi^*_j}{\varepsilon}\right) -1\right]\right\rvert\\
\leq& \exp\left(\frac{\sum_j\left\|\varphi_j^*\right\|_\infty-c}{\varepsilon}\right)\left\lvert \exp\left(\frac{\sum_j\left\|\varphi_j-\varphi^*_j\right\|_\infty}{\varepsilon}\right) -1\right\rvert.
\end{align*}
Taking the supremum over $\prod_j\mathcal{X}_j$ on both sides gives
\begin{align*}
    &\left\| \exp\left(\frac{\sum_j\varphi_j-c}{\varepsilon}\right) -\exp\left(\frac{\sum_j\varphi^*_j-c}{\varepsilon} \right)\right\|_\infty\\\leq& \exp\left(\frac{\sum_j\left\|\varphi_j^*\right\|_\infty-c}{\varepsilon}\right)\left\lvert \exp\left(\frac{\sum_j\left\|\varphi_j-\varphi^*_j\right\|_\infty}{\varepsilon}\right) -1\right\rvert.
    \end{align*}
This implies that the LHS is bounded, because the RHS is by assumption bounded. Therefore, by the boundedness of $f$, H\"older's inequality in conjunction with the consistency of the potentials from Lemma \ref{lem:consistency} implies \eqref{eq:quadconv}. 

Therefore, by Lemma 19.24 in \citet{van2000asymptotic} or Lemma 3.3.5 in \citet{wellner2013weak} the first term is uniformly asymptotically negligible, i.e.
\begin{equation}\label{eq:asyneg}
\sup_{f\in\mathcal{F}}\left\lvert\mathbb{G}_N\left(\hat{K}_Nf-Kf\right)\right\rvert = o_{P^*}\left(1+\sqrt{N}\|\hat{\varphi}^m_N-\varphi^*\|_{C^\infty}\right).
\end{equation}

Now focus on the second and third term jointly. The second term is a standard empirical process over a Donsker class which converges to a tight mean-zero Gaussian process. The third term also converges to a tight Gaussian process by the Continuous Mapping Theorem \citep[Theorem 1.11.1]{wellner2013weak} and Corollary \ref{corr:density}, where we showed weak convergence to a tight mean-zero Gaussian process.

Both weak convergence results are for the marginal distributions, not the joint distribution, however, and in general those two need not be independent. Therefore, we can only conclude that the limit process of the sum of the two terms is a mean-zero tight process.
\end{proof}

\subsection{Proof of Corollary \ref{cor:conditional}}
\begin{proof}
We first prove part A. 
Let $I:=\{i,j\}$, and $I^c:=\{1,..,J\}\setminus I$. 
\begin{align*}
    \dd \gamma(x_{[J]}) &= K(x_1,..,x_J)\dd \bigotimes_{j=1}^J\mu_{j}(x_j)\\
    &=\exp\left\{ \frac{1}{\varepsilon} \left(\sum_{j=1}^J\varphi_j(x_j)-c(x_1,..,x_J)\right)\right\}\dd \bigotimes_{j=1}^J\mu_{j}(x_j)\\
    \dd \gamma(x_I) &= \exp\left\{\frac{1}{\varepsilon}\sum_{j\in I}\varphi_j(x_j)\right\}\int_{\mathcal{X}_{I^c}} \exp\left\{\frac{1}{\varepsilon}\left(\sum_{j\in I^c}\varphi_j(x_j)-c(x_1,..,x_J)\right) \right\}\dd \bigotimes_{j\in I^c}\mu_{j}(x_j) \dd \bigotimes_{j\in I}\mu_{j}(x_j) \\
    &=\exp\left\{\frac{1}{\varepsilon}\sum_{j\in I}\varphi_j(x_j)\right\}
    h(x_I)\dd \bigotimes_{j\in I}\mu_{j}(x_j)
\end{align*}
where $h(x_I):=\int_{\mathcal{X}_{I^c}} \exp\left\{\frac{1}{\varepsilon}\left(\sum_{j\in I^c}\varphi_j(x_j)-c(x_1,..,x_J)\right) \right\}\dd \bigotimes_{j\in I^c}\mu_{j}(x_j)$ 

For any $f\in \mathcal{F}$,
define $g$ on $\mathcal{X}_j$ as 
\[g(x_j):=\int_{\mathcal{X}_i} f(x_i)\exp\left\{\frac{1}{\varepsilon}\sum_{j\in I}\varphi_j(x_j)\right\}
    h(x_I)\dd \mu_{i}(x_i)\]
For any $B\in\mathcal{B}(\mathcal{X}_j)$, using Fubini's theorem,
\begin{align*}
    \mathbb{E}(g(X_j)I_B(X_j)) &= \int_{\mathcal{X}_j} g(x_j)I_B(x_j) \dd\mu_j(x_j) \\
    &= \int_{\mathcal{X}_j} \int_{\mathcal{X}_i} f(x_i) I_B(x_j)\exp\left\{\frac{1}{\varepsilon}\sum_{j\in I}\varphi_j(x_j)\right\}
    h(x_I)\dd \mu_{i}(x_i) \dd\mu_j(x_j)\\
    &= \mathbb{E}(f(X_i)I_B(X_j))
\end{align*}
Hence $g(x_j) = \mathbb{E}(f(X_i)|X_j=x_j)$. Now denote 
\[\hat g_N(x_j) := \int_{\mathcal{X}_i} f(x_i)\exp\left\{\frac{1}{\varepsilon}\sum_{j\in I}\hat\varphi_{N,j}(x_j)\right\}
    \hat h(x_I)\dd \hat\mu_{i}(x_i),\]
to show $\sqrt{N}( \mathbb{E}_{\hat\gamma}(f(X_i)|X_j=x_j)- \mathbb{E}_\gamma(f(X_i)|X_j=x_j)) \rightsquigarrow G_{i|j} $, we only need to show 
\[\sqrt{N} \left( \hat g_N(x_j)-g(x_j)\right) \rightsquigarrow G_{i|j} \]

Expanding $g(x_j)$,
\begin{align*}
    g(x_j)&=\int_{\mathcal{X}_{[-j]}} f(x_i)\exp\left\{\frac{1}{\varepsilon}\left(\sum_{j=1}^J\varphi_j(x_j)-c(x_1,..,x_J)\right) \right\}\dd \bigotimes_{[-j]}\mu_{j}(x_j)\\
&=\int_{\mathcal{X}_{[-j]}} f(x_i) K(x_1,..,x_J)\dd \bigotimes_{[-j]}\mu_{j}(x_j)\\
\hat g_N(x_j) &= \int_{\mathcal{X}_{[-j]}} f(x_i) \hat K_N(x_1,..,x_J)\dd \bigotimes_{[-j]}\hat\mu_{j}(x_j)
\end{align*}
The rest of the proof for part A follows the same way as in Corollary \ref{corr:matching}.

\emph{Proof of Part B.}
Denote $\mathbb{\hat E}[Y(0)|T=1]$ as the estimator for the unobserved outcome under the OT map $\mathbb{E}_\epsilon[Y(0)|T=1]$. Note that here $\mathbb{E}_\epsilon[Y(0)|T=1]$ is not the true counterfactual outcome, but a notation we use to describe the population mean of the estimator $\mathbb{\hat E}[Y(0)|T=1]$. By assumption \ref{ass:causal}, to avoid confusion from the penalty parameter $\varepsilon$, $Y_i(0) = f_0(X_i)+\eta_i$ where the noise term here will be denoted as $\eta_i$. Since the random noise $\eta_i$ is independent from the covariate $X$, we define $\mathbb{\hat E}[Y(0)|T=1]$ and $\mathbb{E}_\varepsilon[Y(0)|T=1]$ in the following way.
\begin{align*}
    \mathbb{\hat E}[Y(0)|T=1] &:= \int_{\mathcal{X}_1} \int_{\mathcal{X}_0}f_0(s)\dd\hat{\gamma}_{\varepsilon,0|1}(s|x) \dd\hat{\mu}_1(x)\\
    \mathbb{E}_\epsilon[Y(0)|T=1] &:= \int_{\mathcal{X}_1} \int_{\mathcal{X}_0}f_0(s)\dd\gamma_{\varepsilon,0|1}(s|x) \dd\mu_1(x)
\end{align*}
Furthermore, based on the conditional density form from Part A,
\begin{align*}
    \mathbb{\hat E}[Y(0)|T=1] 
    &=\int_{\mathcal{X}_0\times \mathcal{X}_1} f_0(s)\hat{K}(s,x)\dd\hat{\mu}_0(s)\dd\hat{\mu}_1(x)
\end{align*}
Part B now follows directly from Corollary \ref{corr:matching}.
\vspace{-0.25cm}\end{proof}







\end{document}